\newtheorem{theorem}{Theorem}
\newtheorem{definition}[theorem]{Definition}
\newtheorem{proposition}[theorem]{Proposition}
\newtheorem{lemma}[theorem]{Lemma}
\newtheorem{corollary}[theorem]{Corollary}
\newtheorem{claim}[theorem]{Claim}
\newenvironment{proofof}[1]{

\noindent{\bf Proof of {#1}:}}
{\hfill\qed\\

}
\newcommand{\ignore}[1]{}
\DeclareMathOperator{\OPT}{OPT}
\newcommand{\pathalg}{\textsf{Path-ALG}}
\newcommand{\paths}{\mathcal{P}}
\DeclareMathOperator{\cov}{cov}
\DeclareMathOperator{\width}{width}
\newcommand{\Fhat}{\widehat{F}}
\newcommand{\lambdah}{\widehat{\lambda}}
\newcommand{\lambdaold}{\lambda^{\text{old}}}
\newcommand{\lambdanew}{\lambda^{\text{new}}}
\newcommand{\yold}{y^{\text{old}}}
\newcommand{\ynew}{y^{\text{new}}}
\newcommand{\yhat}{\widehat{y}}
\newcommand{\cmax}{c_{\max}}
\DeclareMathOperator{\class}{class}
\newcommand{\ALG}{\textsf{ALG}}
\newcommand{\R}{\mathcal{R}}
\title{Tight Bounds for Online Weighted Tree Augmentation\thanks{This work was done in part while the authors were visiting the Simons Institute for the Theory of Computing.}}
\author{
Joseph (Seffi) Naor\thanks{Technion, Israel. Email: \href{mailto:naor@cs.technion.ac.il}{\texttt{naor@cs.technion.ac.il}}. Supported in part by ISF grant 1585/15 and BSF grant 2014414.}
\and
Seeun William Umboh\thanks{The University of Sydney, Australia. Email: \href{mailto:william.umboh@sydney.edu.au}{\texttt{william.umboh@sydney.edu.au}}. Supported in part by NWO grant 639.022.211 and ISF grant 1817/17. Part of this work was done while a postdoc at Eindhoven University of Technology, and while visiting the Hebrew University of Jerusalem and the Technion.}
\and
David P. Williamson\thanks{Cornell University, USA. Email: \href{mailto:davidpwilliamson@cornell.edu}{\texttt{davidpwilliamson@cornell.edu}}}
}
\begin{document}

\date{}

\begin{titlepage}
\def\thepage{}
\thispagestyle{empty}
\maketitle

\begin{abstract}
  The Weighted Tree Augmentation problem (WTAP) is a fundamental problem in network design. In this paper, we consider this problem in the online setting. We are given an $n$-vertex spanning tree $T$ and an additional set $L$ of edges (called links) with costs. Then, terminal pairs arrive one-by-one and our task is to maintain a low-cost subset of links $F$ such that every terminal pair that has arrived so far is $2$-edge-connected in $T \cup F$. This online problem was first studied by Gupta, Krishnaswamy and Ravi (SICOMP 2012) who used it as a subroutine for the online survivable network design problem. They gave a deterministic $O(\log^2 n)$-competitive algorithm and showed an $\Omega(\log n)$ lower bound on the competitive ratio of randomized algorithms. The case when $T$ is a path is also interesting:  it is exactly the online interval set cover problem, which also captures as a special case the parking permit problem studied by Meyerson (FOCS 2005). The contribution of this paper is to give tight results for online weighted tree and path augmentation problems.  The main result of this work is a deterministic $O(\log n)$-competitive algorithm for online WTAP, which is tight up to constant factors.

 \end{abstract}
\end{titlepage}

\section{Introduction}
In the {\em weighted tree augmentation problem} (WTAP), we are given an $n$-vertex spanning tree $T = (V,E)$ together with an additional set of edges $L$ called \emph{links}, where $L \subseteq \binom{V}{2}$. Each link $\ell \in L$ has a cost $c(\ell) \geq 0$. Terminal pairs $ (s_i,t_i) $, $i=\{ 1,\ldots,k \}$, are given and the goal is to compute a minimum cost subset of links $F \subseteq L$ such that each terminal pair is (edge) $2$-connected in $T \cup F$. In the unweighted version, the links have unit costs and the problem is known as the {\em tree augmentation problem} (TAP). If the spanning tree $T$ is a path, then the unweighted problem is called the {\em path augmentation problem} (PAP), while the weighted version is called {\em weighted path augmentation} (WPAP).

TAP and WTAP are considered to be fundamental connectivity augmentation problems, and have been studied extensively. TAP is already known to be APX-hard and the best approximation algorithms for WTAP and TAP achieve approximation factors of 2 and 1.458 respectively \cite{FredericksonJ81, GKZ18}. Improving these bounds is an important open problem.

We consider these problems in the online setting. In online WTAP, we are initially given a spanning tree $T = (V,E)$, and the set of links $L$ together with their costs. The terminal pairs $(s_i,t_i)$ arrive online one by one. Our goal is to maintain a low-cost subset of links $F \subseteq L$ such that each terminal pair seen so far is (edge) $2$-connected in $T \cup F$. 

Online WTAP occurs as a subproblem in the online survivable network design algorithm of Gupta, Krishnaswamy and Ravi~\cite{GKR12}. They observed that the online tree augmentation problem can be cast as an instance of the online set cover problem\footnote{In the online set cover problem, elements arrive online and need to be covered upon arrival by sets from a set system known in advance. (Note that not necessarily all elements will appear.)}  in which the elements are the fundamental cuts defined by the terminal pairs and the sets are the links. Since there are only $n$ elements and $O(n^2)$ sets, applying the results of Alon et al.~\cite{AAABN09} yields a fractional $O(\log n)$-competitive algorithm. But, then, how does one round the fractional solution online? Randomized rounding seems to be the only rounding technique we have for this problem, and it yields a randomized $O(\log^2 n)$-competitive algorithm, as observed by \cite{AAABN09}. This competitive factor can even be achieved deterministically at no further cost \cite{AAABN09}. We note that the loss of a logarithmic factor in the rounding step seems inherent. 
Interestingly, Gupta, Krishnaswamy and Ravi~\cite{GKR12} also showed for the rooted setting ($s_i = r$ for some root $r$) a lower bound of $\Omega(\log n)$ against randomized algorithms. It is easy to observe that this lower bound also holds against fractional online algorithms.

There has been a long line of work on maintaining connectivity online, starting in the seminal paper of Imase and Waxman \cite{IW91}. A
$\Theta(\log n)$-competitive algorithm is given there for the online Steiner problem in undirected graphs. In this problem the graph with a fixed root vertex is known in advance and the terminals are given one by one, and one must ensure that all terminals that have arrived so far are connected to the root.
Other polylogarithmic (in $n$)
competitive algorithms have been given for more complex models of connectivity, including those with node costs rather than edge costs and penalties for violating connectivity constraints; see \cite{AAB04, BC97, NaorPS11, HLP13, HLP14, U15, QUW18}.  Gupta, Krishnaswamy, and Ravi \cite{GKR12} consider the online survivable network design problem, which generalizes WTAP. In this problem, a graph is fixed in advance and terminal pairs $(s_i,t_i)$ arrive with connectivity requirements $r_i$; one must ensure that there are at least $r_i$ edge-disjoint paths between $s_i$ and $t_i$ for all pairs that have arrived thus far.  They give a randomized $\tilde{O}(r_{\max} \log^3 n)$-competitive algorithm for the problem, where $r_{\max} = \max_i r_i$. Note that this problem with uniform requirements $r_i = 2$ already generalizes WTAP.

The online WPAP, when $T$ is a path, is an interesting problem in its own right. This problem is equivalent to online interval set cover. It captures as a special case the parking permit problem introduced by Meyerson \cite{M05}.  In this problem, there is a sequence of days; each day it is either sunny or it rains, and if it rains we must purchase a parking permit.  Permits have various durations and costs.  We can model the parking permit problem by online path augmentation by letting the edges of the path correspond to the sequence of days, the links to the permits, and the rainy days to a terminal pair request for the corresponding day.  Meyerson \cite{M05} gives a deterministic $O(\log n)$-competitive algorithm for the problem and a randomized $O(\log \log n)$-competitive algorithm, and shows lower bounds on the competitive ratio of $\Omega(\log n/\log \log n)$ for deterministic algorithms and $\Omega(\log \log n)$ for randomized algorithms.  Note that online WPAP is a strict generalization of the parking permit problem because the parking permit problem assumes that permits of the same duration have the same cost, whereas no such assumption is made of the links in WPAP.

\subsection{Our Results}
The contribution of this paper is to give tight results (within constant factors) for online tree and path augmentation problems. Our main result is that weighted online tree augmentation has a competitive ratio of $\Theta(\log n)$.
\begin{theorem}
  \label{thm:det-tree-ub}
  There is a deterministic algorithm for online WTAP with competitive ratio $O(\log n)$.
\end{theorem}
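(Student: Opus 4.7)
The plan is to design a deterministic online algorithm that improves the $O(\log^2 n)$ bound of Gupta--Krishnaswamy--Ravi by a full $\log n$ factor. The high-level strategy is to combine online fractional set cover (which is $O(\log n)$-competitive on $n$-element instances) with a rounding procedure tailored to the tree structure that loses only an $O(1)$ factor, rather than the generic $O(\log n)$ loss incurred by online randomized rounding.

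\textbf{Step 1 (reduction to a rooted instance).} Root $T$ at an arbitrary vertex. A standard shadow-completion step replaces every link $\ell = (u,v)$ by two ``up-links'' $(u,\mathrm{lca}(u,v))$ and $(v,\mathrm{lca}(u,v))$, each of cost $c(\ell)$; this at most doubles $\OPT$. Splitting every arriving terminal pair $(s_i,t_i)$ at its LCA then reduces the task to covering root-to-vertex path segments using up-links. From now on I assume all links are up-links and all terminal requests are rooted.

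\textbf{Step 2 (online primal-dual).} Maintain an integral primal $F \subseteq L$ and a feasible dual $y \geq 0$ satisfying $\sum_{e \in P_\ell} y_e \leq c(\ell)$ for every up-link $\ell$, where $P_\ell$ denotes the tree path of $\ell$. When a rooted pair $(r,t)$ arrives, process the tree edges on the $r$--$t$ path that are not yet covered by $F$ in some canonical (say bottom-up) order: for each such edge $e$, raise $y_e$ continuously until some up-link $\ell$ containing $e$ becomes tight, then add $\ell$ to $F$ and mark every tree edge of $P_\ell$ as covered. Dual feasibility is preserved throughout, so by weak duality $\sum_e y_e$ is always a valid lower bound on $\OPT$.

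\textbf{Step 3 (competitive analysis).} It suffices to show $\mathrm{cost}(F) = O(\log n) \cdot \sum_e y_e$. I would set up a charging scheme in which each link $\ell$ added to $F$ redistributes its cost $c(\ell) = \sum_{e \in P_\ell} y_e$ back among the dual variables it exhausts. The key lemma to prove is that every dual variable $y_e$ receives total charge $O(\log n) \cdot y_e$. The tool here is a heavy-light (or centroid) decomposition of $T$: any root-to-vertex path meets $O(\log n)$ heavy paths, and within a single heavy path the up-links restrict to an interval cover structure on which one can hope to bound the number of times a given $y_e$ contributes to a newly tight constraint.

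\textbf{Main obstacle.} The decisive step is the charging bound in Step~3: a naive per-heavy-path argument double-counts and yields only $O(\log^2 n)$ (one $\log n$ from the depth of the decomposition, another from interval overlap inside each heavy path). To obtain the tight $O(\log n)$ bound, the tree structure must be exploited globally rather than path-by-path --- for instance, via a coupled potential function that merges charges across heavy paths, or via a bucketing of links by cost class so that the charges on a given dual variable telescope with the depth of the decomposition. This is precisely the place where the generic set-cover reduction underlying the $O(\log^2 n)$ algorithm breaks down and where a new structural idea is required.
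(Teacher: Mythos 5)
There is a genuine gap, and you have correctly identified its location yourself: Step~3 is not actually carried out. Your plan up through Step~2 (shadow-completion to up-links, a primal-dual procedure that raises edge duals until a link goes tight, weak duality) reproduces the standard machinery that already gives $O(\log^2 n)$, and your concluding paragraph concedes that the per-heavy-path charging only yields $O(\log^2 n)$ and that ``a new structural idea is required.'' The proposal stops exactly at the point where the proof must begin, so as written it does not establish the theorem.

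The structural idea the paper uses, which is what you are missing, is the following. Take a rooted path decomposition $\paths$ of width $O(\log n)$ and, for each link $\ell$, consider its \emph{projection} $\pi_Q(\ell)$ onto each decomposition path $Q$ it meets. Call the projection \emph{$Q$-rooted} if it touches the endpoint of $Q$ closest to the tree root, and \emph{$Q$-non-rooted} otherwise. The key observation (Lemma~\ref{lem:decomp-rooted}) is that a single link can be $Q$-non-rooted for \emph{at most one} $Q \in \paths$, because non-rootedness forces the LCA of the link's endpoints to lie in the interior of $Q$, and the decomposition paths are edge-disjoint. This asymmetry is then exploited by designing a \emph{refined} primal-dual algorithm on each decomposition path, one whose cost is $O(1)$ times the rooted portion of any feasible solution plus $O(\log n)$ times the non-rooted portion (a ``nice'' algorithm, Section~\ref{sec:path}). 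Summing over the paths, each optimal link pays the cheap $O(1)$ factor on the $O(\log n)$ paths where it is rooted and the expensive $O(\log n)$ factor on the at most one path where it is non-rooted, which telescopes to $O(\log n)$ overall. Your ``coupled potential function'' and ``telescoping by cost class'' intuitions gesture at this, but without the rooted/non-rooted dichotomy, the at-most-one-non-rooted-path lemma, and a concrete path algorithm achieving the asymmetric guarantee, the double-counting you flagged is not resolved and the argument does not close.

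One smaller remark: your Step~1 (splitting requests at LCAs and keeping only up-links) is a sound preprocessing move, but it is not the notion of ``rooted'' that the paper's analysis needs; the paper's rootedness is \emph{local to each decomposition path} (whether the projected link touches the path's upper endpoint), not global rootedness of the request at the tree root. Conflating the two is harmless at the modeling level, but it obscures the actual invariant that makes the $O(\log n)$ bound go through.
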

\noindent This result is tight up to constant factors because of the $\Omega(\log n)$ lower bound on randomized algorithms for WTAP given by \cite{GKR12}.  As we mention above,
\cite{GKR12} gives a randomized $\tilde{O}(r_{\max} \log^3 n)$-competitive algorithm for the online survivable network design problem.  An intriguing open question is whether this competitive ratio can be improved, say to $O(r_{\max} \log n)$ or even $O(\log n)$.  In fact, we are unaware of lower bounds that rule out the latter bound. We view our main result as a necessary stepping stone towards obtaining an $O(r_{\max} \log n)$ or $O(\log n)$ bound. Indeed, for $r_{\max} = 2$,  plugging in our algorithm for online WTAP into the algorithm of \cite{GKR12} improves their competitive ratio from $\tilde{O}(\log^3 n)$ to $\tilde{O}(\log^2 n)$.
\begin{corollary}
  For online survivable network design with $r_{\max} = 2$, there is a randomized algorithm with competitive ratio $\tilde{O}(\log^2 n)$.
\end{corollary}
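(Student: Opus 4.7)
The plan is to invoke the reduction of Gupta, Krishnaswamy and Ravi~\cite{GKR12} from online survivable network design to online WTAP essentially as a black box, and simply substitute our deterministic $O(\log n)$-competitive algorithm (Theorem~\ref{thm:det-tree-ub}) in place of the $O(\log^2 n)$-competitive WTAP subroutine that they use. Since the corollary is a strict improvement obtained by a subroutine swap, no new algorithmic ideas are needed beyond Theorem~\ref{thm:det-tree-ub}; the main content is to verify that the swap is legitimate and to chase through the competitive ratio.

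The first step is to recall the structure of the GKR algorithm for $r_{\max}=2$: starting from a solution to online Steiner forest (which is $O(\log^2 n)$-competitive in the $\tilde{O}(\cdot)$ sense used in \cite{GKR12}), one augments it to a $2$-edge-connected solution by solving, on top of a suitable auxiliary tree built from the Steiner forest, an instance of online WTAP whose terminal pairs are the arriving demand pairs. Their overall competitive ratio for $r_{\max}=2$ decomposes as (competitive ratio of the Steiner forest phase) plus (competitive ratio of the augmentation phase), and the augmentation phase pays exactly $\alpha_{\mathrm{WTAP}}$ times (a polylogarithmic factor for the reduction) times $\OPT$.

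Plugging in the $\alpha_{\mathrm{WTAP}} = O(\log n)$ bound of Theorem~\ref{thm:det-tree-ub} in place of their $O(\log^2 n)$ bound for the augmentation phase saves one factor of $\log n$ in that term. The Steiner forest phase is unchanged, and already contributes $\tilde{O}(\log^2 n)$. Adding the two contributions, the overall competitive ratio drops from $\tilde{O}(\log^3 n)$ to $\tilde{O}(\log^2 n)$, exactly as claimed; randomization is inherited from the Steiner forest subroutine, since our WTAP algorithm is deterministic.

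The one point that requires verification — and which I expect to be the only non-routine step — is that the WTAP instances generated by the GKR reduction fit the input model of Theorem~\ref{thm:det-tree-ub}: namely, that the tree and the link set with their costs can be determined upfront (or at least compatibly with the online arrivals), and that only the terminal pairs arrive online. Inspecting the reduction in \cite{GKR12} confirms this is the case for $r_{\max}=2$: the auxiliary tree is derived from the a-priori given graph together with the Steiner forest solution, the links are taken from the edges of the given graph, and the terminal pairs are precisely the online demands. Hence the substitution is valid as a black box, and the corollary follows.
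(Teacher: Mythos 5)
Your proposal is correct and takes essentially the same route as the paper: the corollary is obtained by plugging the $O(\log n)$-competitive online WTAP algorithm of Theorem~\ref{thm:det-tree-ub} into the survivable-network-design framework of Gupta, Krishnaswamy, and Ravi~\cite{GKR12}, saving one $\log n$ factor in the augmentation phase. The paper treats this as an immediate consequence of the preceding discussion rather than giving a separate proof; your additional sanity check that the WTAP instances arising in the reduction match the input model of Theorem~\ref{thm:det-tree-ub} is a reasonable point to make explicit, but it does not change the argument.
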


Our second result shows that the competitive ratio for deterministic algorithms for online path augmentation is also $\Theta(\log n)$.  Meyerson \cite{M05} gives a lower bound of $\Omega(\log n / \log \log n)$ for deterministic algorithms for the parking permit problem, and hence for online path augmentation. We improve the analysis of his lower bound instance to show the following.

\begin{theorem}
  \label{thm:det-path-lb}
  Every deterministic algorithm for online WPAP has competitive ratio $\Omega(\log n)$.
\end{theorem}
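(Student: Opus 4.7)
The plan is to reanalyze Meyerson's lower bound instance for the parking permit problem with sharper accounting in order to match the $\Omega(\log n)$ upper bound. Meyerson's instance is parameterized by a branching factor $b$: there are $K$ permit types with durations $D_i = b^i$ and costs $c_i$ chosen so that $c_i/D_i$ is geometrically decreasing, together with a recursive adversary of branching factor $b$. His analysis yields $\Omega(K)$, and the constraint $b^K \leq n$ forces $K = \log n/\log\log n$ at the optimal $b = \Theta(\log n)$. I would consider the same template with $b = 2$, so $K = \log_2 n$, and argue that a tighter accounting establishes the $\Omega(K) = \Omega(\log n)$ bound on the same instance.

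Concretely, the adversary maintains an active interval $A_t$ of length $n/2^t$, initially $A_0 = [1,n]$. At step $t$ it reveals a single rainy day $d_t \in A_t$ not already covered by any purchased permit, observes the algorithm's response, and then sets $A_{t+1}$ to the half of $A_t$ that is not fully covered by the algorithm's newly purchased permits (breaking ties arbitrarily). After $K$ rounds every revealed day lies in $A_0$, so $\OPT$ is at most the cost of a single level-$K$ permit. The key structural property driving the lower bound is that a permit purchased at step $t$ cannot cover any future $d_{t'}$ with $t' > t$, since $d_{t'} \in A_{t'} \subseteq A_{t+1}$ lies outside the span of that permit by the adversary's choice of $A_{t+1}$.

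To lower-bound $\ALG$, I would charge each purchased permit to its unique ``first use'' step and bound the sum of charges. By the property above, each of the $K$ steps produces a distinct charge, and a cost calibration of the $c_i$ ensures that these charges sum to $\Omega(K) \cdot \OPT$; each level-$j$ permit either (a) has small span, in which case its cost is at least a $1/K$ fraction of $\OPT$ under a suitable choice of $c_i$, or (b) has large span, in which case the algorithm has committed to a permit of level at least $K - t - 1$, whose cost alone accounts for $\Omega(1)$ fraction of $\OPT$ times $K$ per-step charges by amortization. Meyerson's original analysis handles the branching via a union bound at each level, losing a $\log b$ factor; our sharper accounting attributes the algorithm's cost either to distinct per-step charges or to expensive preemptive permits, which with $b = 2$ retains the full $\Omega(K)$ bound.

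The main obstacle is making precise the two-case accounting that avoids the $\log b$ loss. One must show that the algorithm cannot combine ``small'' purchases across steps so as to reduce per-step charges below the desired threshold, and that ``large'' preemptive purchases genuinely amortize over $\Omega(K)$ steps. The delicate interaction is between the algorithm's hedging strategies (purchasing medium-length permits that partially cover $A_t$ at several recursion levels) and the adversary's strictly binary halving, which in contrast to Meyerson's $b$-way split must be exploited directly rather than through union-bound concentration. Formalizing this combinatorial interaction, and verifying that the calibrated cost structure $c_i$ simultaneously preserves Meyerson's $c_i/D_i$ monotonicity and makes every charge comparable to $\OPT/K$, is where the improvement over Meyerson's bound is concentrated.
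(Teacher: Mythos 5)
Your proposal takes a genuinely different route from the paper's, and it contains both an incompleteness and a soundness concern that prevent it from being a correct proof.

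The paper's proof abandons Meyerson's recursive-halving adversary entirely. Its adversary simply requests the \emph{leftmost} edge not yet covered by the algorithm, repeatedly, until the whole path is covered. The link structure is a laminar family where class-$j$ links have length $(2B)^j$ and cost $B^j$ (so cost-per-length halves at each level). After a standard reduction to \emph{canonical} algorithms, the paper defines, for each class $j$, the set $F'_j$ of class-$j$ links whose interval contains a request; each $F'_j$ is feasible, so $\OPT \le c(F'_j)$ for every $j$. The technical heart is a token argument showing $\sum_j c(F'_j) \le 2\,c(F)$, where $F$ is the algorithm's solution. This immediately yields $c(F) \ge \tfrac{1}{2}(\log_{2B} n)\OPT$. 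There is no halving, no ``active interval,'' and no two-case accounting: the entire charging is done via the claim that every non-leaf link $\ell \in F' \setminus F$ has all its children in $F'$, which lets tokens flow up the laminar hierarchy with geometric decay.

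In your proposal, two things are off. First, the central combinatorial step is acknowledged but not carried out: you write that ``the main obstacle is making precise the two-case accounting that avoids the $\log b$ loss'' and that ``formalizing this combinatorial interaction \ldots\ is where the improvement over Meyerson's bound is concentrated.'' That is precisely the content of the theorem, so the proof is incomplete where it matters. Second, the structural property you rely on --- ``a permit purchased at step $t$ cannot cover any future $d_{t'}$'' --- fails when the algorithm buys a permit spanning all of $A_t$ (e.g.\ a high-level permit). In that case there is no ``half not fully covered by the newly purchased permits,'' so the rule defining $A_{t+1}$ breaks down and the adversary has no uncovered day to request. Meyerson sidesteps this by recursing across \emph{all} $b$ children, triggering separate sub-instances, rather than following a single chain of nested intervals; your flat single-request-per-step adversary has no analogue of this branching, so it cannot force the algorithm to keep paying once it commits to a wide permit. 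You would need either to reinstate the recursive branching (at which point you are back to Meyerson's analysis and must actually show how to avoid the $\log b$ loss) or to switch to a non-adaptive request order, as the paper does, and find a global charging argument that does not depend on the interval chain.
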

\noindent Since we use a parking permit instance to show the lower bound, we have the same lower bound for the parking permit problem.

Finally, we show that the fractional version of online path augmentation has competitive ratio $\Theta(\log \log n)$ for deterministic algorithms. Meyerson \cite{M05} gives a lower bound of $\Omega(\log \log n)$ for randomized algorithms for the parking permit problem, and hence for online fractional path augmentation. Our algorithm implies an exponential gap between the competitive ratios of fractional path augmentation and fractional tree augmentation.  We show the following.

\begin{theorem}
  \label{thm:rand-path-ub}
  There is a deterministic algorithm for online fractional WPAP with competitive ratio $O(\log \log n)$.
\end{theorem}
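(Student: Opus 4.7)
The plan is to run the online primal--dual fractional set cover algorithm of Alon et al.~\cite{AAABN09} on a carefully preprocessed link family in which every edge is covered by only $O(\log n)$ links. The natural LP relaxation is
\[
\min \Bigl\{\sum_\ell c(\ell)\, x_\ell : \sum_{\ell \ni e} x_\ell \geq 1 \text{ for each requested edge } e,\; x \geq 0\Bigr\},
\]
with edge dual variables $y_e$ subject to $\sum_{e \in \ell} y_e \leq c(\ell)$. Terminal pair $(s_i,t_i)$ arrivals translate into simultaneous covering requests on all edges of the $s_i$--$t_i$ subpath. Applied directly to $L$, the primal--dual framework is $O(\log d)$-competitive where $d$ is the maximum element frequency; in WPAP, $d$ can be $\Omega(n)$, giving only $O(\log n)$.

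To shrink the effective frequency to $O(\log n)$, I would partition the links by length into $K = \lceil \log_2 n \rceil$ classes $\mathcal{L}_k = \{\ell : 2^k \le |\ell| < 2^{k+1}\}$, and for each scale $k$ impose a dyadic partition of the path into blocks of $2^k$ consecutive edges. Define a \emph{canonical} link family consisting, for each scale $k$ and each block $B$, of a constant number of cheapest class-$k$ links containing $B$ together with a constant number of cheapest class-$k$ links straddling each block boundary. This family should have two key properties: (i) each edge is covered by only $O(K) = O(\log n)$ canonical links, because it lies in $O(1)$ dyadic blocks per scale; and (ii) any feasible fractional solution over $L$ can be converted to one over the canonical links with $O(1)$ cost blowup, because any class-$k$ link $\ell$ intersects $O(1)$ scale-$k$ blocks and can be locally emulated by $O(1)$ canonical class-$k$ links of cost at most $c(\ell)$.

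Running the primal--dual algorithm on the canonical instance then produces a fractional cover of cost $O(\log K) \cdot \OPT_{\text{canonical}} = O(\log \log n) \cdot \OPT$, as desired. The main obstacle I anticipate is calibrating the canonicalization so that the cost-blowup constant in property (ii) is genuinely $O(1)$ independent of $K$: a class-$k$ link whose interval straddles a scale-$k$ block boundary must still be replaceable by $O(1)$ canonical links at proportional cost, which is what the boundary-straddling canonical links are designed to handle. Because the canonicalization depends only on $L$ (which is given up front) and not on the request sequence, it can be performed as an offline preprocessing step and does not compromise the online nature of the algorithm.
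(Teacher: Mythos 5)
Your high-level plan coincides with the paper's: reduce the effective frequency per request to $O(\log n)$ so that the Alon et al.\ multiplicative-update analysis gives $O(\log\log n)$. But your specific mechanism for achieving this---a static, length-based dyadic canonicalization---has a genuine gap in property~(ii). In WPAP (unlike the parking permit problem) links of the same length class can have arbitrary costs \emph{and} arbitrary positions, and these are not interchangeable. Concretely: suppose the $B_0/B_1$ boundary at scale $k$ is straddled both by a link $\ell_1$ of cost $1$ covering edges $[2^{k-1}, 3\cdot 2^{k-1})$ and by a link $\ell_3$ of cost $0.1$ covering $[2^k-1, 2^{k+1}-1)$. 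Your canonicalization retains $\ell_3$ (it is cheaper) and discards $\ell_1$, but a feasible solution might use $\ell_1$ precisely because it is the only cheap link covering edge $e_{2^{k-1}}$, which $\ell_3$ misses. Adding ``a constant number of cheapest straddling links'' does not fix this: there can be $\Omega(2^k)$ distinct class-$k$ positions straddling a boundary, each the unique cheap cover of some edge, and no $O(1)$-size subfamily preserves cheap coverage for all of them. The obstacle you flag in your last paragraph is therefore not a matter of calibrating constants---the construction really discards necessary links, and the cost blowup can be unbounded.

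The paper sidesteps this by classifying links by \emph{cost} (powers of $2$), pruning to a minimal instance in which each edge is covered by at most two links per cost class, and then \emph{adaptively} restricting each request $e_i$ to the $O(\log n)$ links covering $e_i$ with cost in $[\OPT_i/n,\, 2\OPT_i]$, where $\OPT_i$ is the current offline optimum. Requests coverable by a link cheaper than $\OPT_i/n$ are served integrally and charged to a separate $O(\OPT)$ budget. This adaptivity is what makes the frequency reduction compatible with cost preservation, but it forces two ingredients you do not have: an LP integrality lemma (Lemma~\ref{lem:LP-integral}, via the consecutive-ones property) to justify comparing against an integral $\OPT_i$, and a phase/doubling argument to exhibit a ``restricted'' feasible solution of cost $O(\OPT)$ whose links lie in the windows $L_i$. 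If you want to salvage a static preprocessing in the spirit of your proposal, you would need to canonicalize by cost class rather than length class---but then the number of cost classes is unbounded a priori, which is exactly why the paper makes the restriction request-dependent.
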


\noindent Recall that online WPAP is equivalent to online interval set cover. Thus, Theorems~\ref{thm:det-tree-ub} and \ref{thm:rand-path-ub} imply that restricting online set cover to interval sets allows for improved competitive ratios. Also, even though interval set cover and interval hitting set are equivalent in the offline case, the latter turns out to be exponentially more difficult than the former in the online case; in contrast to Theorem~\ref{thm:rand-path-ub}, Even and Smorodinsky~\cite{EvenS14} gave a lower bound of $\Omega(\log n)$ for online fractional hitting set.

\subsection{Our Techniques}
We now outline some of the ideas behind our algorithms.

\paragraph*{Online WTAP} As mentioned  before, there is an online {\em fractional} $O(\log n)$-competitive algorithm for WTAP that follows from the work of \cite{AAABN09} on the online set cover problem. However, it is unclear how to exploit the special structure of the set system in hand in WTAP (as defined by the links) to avoid the loss of another factor of $O(\log n)$ when rounding the fractional solution into an integral one (either randomized or deterministic). Thus, our approach to proving Theorem~\ref{thm:det-tree-ub} takes a completely different route.
There are two key ingredients in our proof:

\begin{enumerate}
\item {\bf Low-width path decomposition.} The first ingredient is a path decomposition of low ``width'': in particular, there is a decomposition of the tree into edge-disjoint paths such that any path in the tree intersects at most $O(\log n)$ paths of the decomposition. Such a decomposition can be obtained using the heavy-path decomposition of Sleator and Tarjan~\cite{SleatorT83}. This immediately implies an $O(\log n)$-approximate black-box reduction from online tree augmentation to online path augmentation. Unfortunately, Theorem~\ref{thm:det-path-lb} gives a lower bound of $\Omega(\log n)$ for the latter problem. Since a tree may have width $\Omega(\log n)$ in the worst case (e.g., a binary tree), the best we can achieve for WTAP using a black-box reduction is a competitive ratio of $O(\log^2 n)$.
\item {\bf Refined guarantee for path augmentation.} The second ingredient is our main technical contribution. We define a notion of \emph{projection} for links onto paths in the path decomposition, and call the projected link \emph{rooted} if it has, as its endpoint, the node of the path closest to the root of the tree.  The key insight is that the path decomposition has a special structure: for each link, its projection is rooted for all but at most one of the paths in the decomposition.  We then give a version of the path algorithm that treats rooted links differently from non-rooted links; in particular, an online path augmentation algorithm that finds a solution whose cost is within a constant factor of the rooted links of the optimal solution plus an $O(\log n)$ factor of the cost of the non-rooted links.  Intuitively, then, summing the cost over all the paths in the decomposition, each link appears as a rooted link in at most $O(\log n)$ paths in the decomposition and as a non-rooted link in at most one path in the decomposition, yielding the $O(\log n)$ factor overall.
\end{enumerate}

\paragraph*{Online Fractional WPAP}
Directly applying the online fractional set cover algorithm of \cite{AAABN09} to online fractional WPAP only yields a competitive ratio of $O(\log n)$. However, for online set cover instances in which each element is covered by at most $d$ sets, the algorithm of \cite{AAABN09} is $O(\log d)$-competitive. Thus, to get a competitive ratio of $O(\log \log n)$, the basic idea is to reduce to a restricted instance in which each request can only be covered by $O(\log n)$ links. For such restricted instances, applying the algorithm of \cite{AAABN09} gives a competitive ratio of $O(\log \log n)$.

\subsection{Other Related Work}
Recently, Dehghani et al. \cite{DehghaniEHLS18} studied online survivable network design, giving a bicriteria approximation algorithm, and considering several stochastic settings.

\subsection{Organization of the Paper}
We start with the preliminaries and describe the low-width path decomposition in Section~\ref{sec:prelim}. In Section~\ref{sec:reduction}, we present the refined guarantee needed for online path augmentation. Then, we show how to achieve the required refined guarantee in Section~\ref{sec:path}.

\section{Preliminaries}
\label{sec:prelim}

We restate the formal definition of the problem. In the online weighted tree augmentation problem, we are initially given a spanning tree $T = (V,E)$, and an additional set of edges called \emph{links} $L \subseteq \binom{V}{2}$ with costs $c(\ell) \geq 0$. Then, terminal pairs $(s_i,t_i)$ arrive one by one. Our goal is to maintain a low-cost subset of links $F \subseteq L$ such that each terminal pair seen so far is $2$-connected in $T \cup F$.

\paragraph*{Notation} Denote by $P(u,v)$ the path between $u$ and $v$ in $T$. For a link $\ell = (u,v)$, we write $P(\ell) = P(u,v)$ and for a set $S$ of links, we write $P(S) = \bigcup_{\ell \in S} P(\ell)$. We say that a link $\ell \in L$ \emph{covers} an edge $e \in E$ if $e \in P(\ell)$. Define $\cov(e) = \{\ell \in L : e \in P(\ell)\}$ to be the set of links covering $e$. Note that $\cov(e)$ is exactly the set of links crossing the cut induced by the tree edge $e$. Let $R \subseteq E$ be a set of requests. Then, a solution $F$ is feasible if and only if for every edge $e \in R$, we have $F \cap \cov(e) \neq \emptyset$; or equivalently, if $P(F) \supseteq R$.

\paragraph*{Simplifying assumptions} In the rest of this paper, we assume that link costs are powers of $2$; this assumption is without loss of generality since we can round up all edge costs and lose only a factor of 2 in the competitive ratio. Given that link costs are powers of $2$, we say that the \emph{class} of a link $\ell$ is $j$ if $c(\ell) = 2^j$ and we write $\class(\ell) = j$.

Given an instance in which link costs are powers of $2$, we also assume that requests are \emph{elementary}: each request  $(s_i,t_i)$ is a tree edge $e \in E$. This is without loss of generality because an adversary can simulate a non-elementary request $(s_i,t_i)$ by a sequence of requests, where each request is an edge along the path between $s_i$ and $t_i$ in $T$.

\paragraph*{Path decomposition}
We next define a rooted path decomposition, see Figure~\ref{fig:decomp} for an example.
\begin{definition}
  [Rooted Path Decomposition]
  Let $T$ be a tree. A \emph{path decomposition} of $T$ is a partition $\paths$ of its edge set into edge-disjoint paths. We say $\paths$ is \emph{rooted} if there is a vertex $r \in T$ such that if we root $T$ at $r$, then for each path $p \in \paths$, the least common ancestor of the vertices on $p$ is an endpoint of the path (we call this endpoint the \emph{root} of $p$). The \emph{width} of $\paths$ is $\width(\paths) = \max_{u,v \in V(T)} |\{p \in \paths: P(u,v) \cap p \neq \emptyset\}|$, the maximum number of paths $p \in \paths$ that any path in $T$ intersects.
\end{definition}

\begin{figure}
  \centering
  \begin{subfigure}[b]{0.4\textwidth}
    \centering
    \includegraphics[scale=0.4]{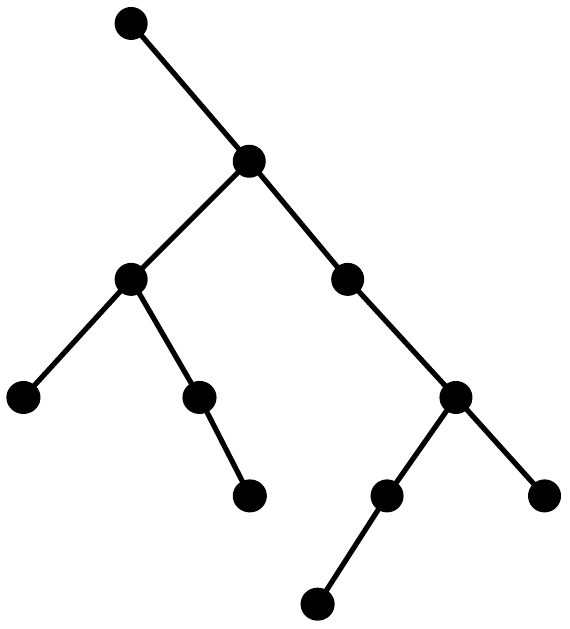}
  \end{subfigure}
  \hfill
  \begin{subfigure}[b]{0.4\textwidth}
    \centering
    \includegraphics[scale=0.4]{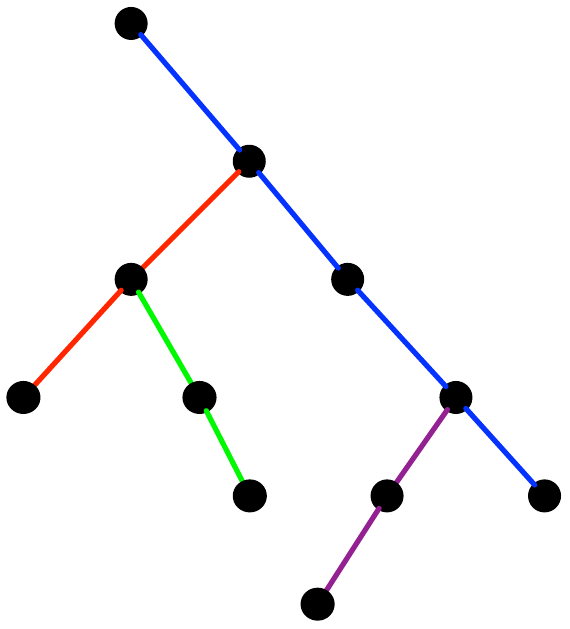}
  \end{subfigure}
  \caption{\small Example of a graph and its rooted path decomposition. The edge colors reflect the partition of the edges. The root of each path is the highest vertex of the path.}
  \label{fig:decomp}
\end{figure}

\begin{lemma}[Existence of Low Width Rooted Path Decompositions]
  \label{lem:pdecomp}
  Every tree on $n$ vertices admits a rooted path decomposition of width $O(\log n)$.
\end{lemma}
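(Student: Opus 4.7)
The plan is to apply the classical heavy-path decomposition of Sleator and Tarjan and then verify that it satisfies both the \emph{rooted} property and the width bound in the sense of the definition above.

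First, I would pick an arbitrary vertex $r \in V(T)$ and root $T$ at $r$. For each non-leaf vertex $v$, let $\text{size}(v)$ denote the number of descendants of $v$ (including $v$), and let the \emph{heavy child} of $v$ be the child $c$ maximizing $\text{size}(c)$, breaking ties arbitrarily. Call the edge $(v,c)$ \emph{heavy} and every other edge \emph{light}. The heavy edges form vertex-disjoint directed paths descending from certain vertices; together with the singletons formed by the light edges viewed as length-one paths, these constitute a partition $\paths$ of $E$ into edge-disjoint paths.

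Next I would check the rooted property. Each heavy path $p \in \paths$, by construction, is a descending chain $v_0 \to v_1 \to \cdots \to v_k$ where each $v_{i+1}$ is the heavy child of $v_i$; its topmost vertex $v_0$ is an ancestor of every other vertex on $p$, hence is the least common ancestor of the vertices on $p$ and is an endpoint of $p$. The length-one paths arising from light edges trivially have their upper endpoint as the LCA. Thus $\paths$ is rooted at $r$.

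The key step is the width bound. Fix $u,v \in V(T)$ and let $w = \text{LCA}(u,v)$, so $P(u,v)$ is the concatenation of the descending paths $P(w,u)$ and $P(w,v)$. I would show that each of these descending paths intersects at most $O(\log n)$ elements of $\paths$, using the standard doubling argument: if $e = (x, \text{parent}(x))$ is a light edge on the path from $u$ to $w$, then $\text{size}(\text{parent}(x)) \geq 2\,\text{size}(x)$, since otherwise $x$ would have been chosen as the heavy child. Hence the number of light edges on $P(w,u)$ is at most $\log_2 n$, and each time the path leaves one element of $\paths$ and enters another it either crosses a light edge or transitions at a light edge's endpoint, so the number of distinct elements of $\paths$ intersected by $P(w,u)$ is at most $\log_2 n + 1$. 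The same bound holds for $P(w,v)$, giving $\width(\paths) = O(\log n)$.

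I do not expect a serious obstacle; the only mild care needed is to treat light edges as their own length-one paths in the decomposition so that $\paths$ truly partitions $E$, and to account for them correctly when bounding the number of paths of $\paths$ that a tree path can meet.
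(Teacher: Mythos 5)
Your proof is correct in its essentials and uses the heavy-path decomposition of Sleator and Tarjan directly. The paper explicitly notes that this decomposition gives the lemma, but for self-containedness it actually proves the lemma by a different route: a recursive \emph{balanced caterpillar decomposition}. There one picks a balanced vertex separator $v$, takes any root-to-leaf path through $v$ as the backbone (this is one path of $\paths$), and recurses on the subtrees hanging off the backbone, each of size at most $n/2$; the $O(\log n)$ recursion depth immediately gives the width bound, since a tree path can meet at most two new backbones per level. Your approach instead bounds the number of light edges on any root-to-node path via the standard size-doubling argument. The two arguments are morally dual (both exploit a halving phenomenon), and both yield $O(\log n)$; the caterpillar route avoids reasoning about runs of heavy edges, while your route avoids invoking existence of a balanced separator. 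One small arithmetic slip in your write-up: a descending path with $m$ light edges can meet up to $2m+1$ elements of $\paths$ (the $m$ light-edge singletons plus up to $m+1$ maximal heavy runs, each contained in a single heavy path), not $m+1$; this only changes the constant and the conclusion $\width(\paths)=O(\log n)$ stands.
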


An $O(\log n)$-width rooted path decomposition can be obtained using the so-called \emph{heavy path decomposition} of Sleator and Tarjan~\cite{SleatorT83}. For the sake of completeness, we give a proof here. The following notion of a caterpillar decomposition will be convenient.

\begin{definition}
  [Caterpillar Decomposition]
  Let $T$ be a rooted tree on $n$ vertices. A \emph{caterpillar decomposition} of $T$ is a vertex-disjoint decomposition of $T$ into a root-to-leaf path $B$ (called the \emph{backbone}) and subtrees $T_i$ that are connected to $B$. The decomposition is \emph{balanced} if for each subtree $T_i$, we have $|V(T_i)| \leq n/2$.
\end{definition}

\begin{lemma}
  \label{lem:balanced}
  Every tree admits a balanced caterpillar decomposition.
\end{lemma}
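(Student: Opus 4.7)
The plan is to construct the backbone greedily via the standard heavy-path idea, and then verify the size bound on the hanging subtrees by a simple counting argument.

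First I would fix any root $r$ of $T$ (the lemma places no restriction on the choice of root), and construct the backbone $B$ by a greedy walk: starting at $r$, at each internal vertex $v$ move to a child $u$ whose subtree $T_u$ maximizes $|V(T_u)|$ among all children of $v$, breaking ties arbitrarily. Continue until a leaf is reached; the resulting root-to-leaf path is $B$. The hanging subtrees $T_i$ are then exactly the subtrees rooted at the non-backbone children of backbone vertices, together with the edges connecting them to the backbone\,---\,these are vertex-disjoint from $B$ and their roots are attached to $B$ as required by the definition.

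Next I would verify that $|V(T_i)| \leq n/2$ for every hanging subtree $T_i$. Let $T_i$ be rooted at a child $w$ of a backbone vertex $v$, and let $u$ be the backbone child of $v$. By the greedy choice, $|V(T_u)| \geq |V(T_w)|$. Since $T_v$ contains $v$ itself together with the disjoint subtrees $T_u$ and $T_w$ (and possibly others),
\begin{equation*}
|V(T_v)| \;\geq\; 1 + |V(T_u)| + |V(T_w)| \;\geq\; 1 + 2|V(T_w)|,
\end{equation*}
so $|V(T_w)| \leq (|V(T_v)|-1)/2 < n/2$.

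There is no real obstacle here: the only point worth care is making sure the heavy-child choice is the right invariant to sustain the $n/2$ bound, which the inequality above makes transparent. I would conclude by noting that this construction gives a balanced caterpillar decomposition of $T$, which is exactly what is needed in the recursive argument for Lemma~\ref{lem:pdecomp} (one obtains an $O(\log n)$-width rooted path decomposition by recursing on each $T_i$ and adding its backbone as a new path).
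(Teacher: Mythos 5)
Your proof is correct, but it takes a genuinely different route from the paper's. The paper invokes the existence of a balanced vertex separator (centroid) $v$, i.e.\ a vertex whose removal leaves components each of size at most $n/2$, and then takes any root-to-leaf path through $v$ as the backbone; every hanging subtree is then contained in some component of $T \setminus \{v\}$ and so trivially has at most $n/2$ vertices. You instead build the backbone greedily \`a la Sleator--Tarjan heavy paths, always descending to a child with the largest subtree, and get the bound from the observation that $T_v$ contains two vertex-disjoint subtrees $T_u$ and $T_w$ each of size at least $|V(T_w)|$, so $|V(T_w)| \le (|V(T_v)|-1)/2 < n/2$. Both arguments are sound and give the same bound. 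The centroid route is a one-liner once the separator fact is cited; your heavy-path route is fully self-contained and is in fact the construction the paper itself credits as the origin of the low-width decomposition. One small imprecision: the definition of a caterpillar decomposition already presupposes a rooted tree, so the root is not yours to choose --- you should run the greedy walk from the given root rather than ``fix any root $r$.'' This does not affect the correctness of the construction (it works from any root), but it matters in the recursive application inside Lemma~\ref{lem:pdecomp}, where the root of each hanging subtree $T_i$ is determined by the ambient tree and the walk must begin there so that the resulting decomposition is rooted with respect to a single global root.
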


\begin{proof}
  The existence of a balanced caterpillar decomposition is an easy consequence of the fact that every tree $T$ has a balanced vertex separator $v$, i.e. after removing $v$ from $T$, each of the remaining connected components has at most $n/2$ vertices. The following is a balanced caterpillar decomposition of $T$: pick an arbitrary root-to-leaf path containing $v$ to be the backbone $B$, and the subtrees $T_i$ to be the connected components of $T$ after removing $B$.
\end{proof}

\begin{proofof}{Lemma~\ref{lem:pdecomp}}
  The lemma easily follows by choosing an arbitrary root vertex of $T$ and recursively applying Lemma \ref{lem:balanced}.
\end{proofof}

\section{Refined Guarantee for Online Path Augmentation}
\label{sec:reduction}

As already mentioned, Lemma~\ref{lem:pdecomp} implies an $O(\log n)$-approximate black-box reduction to online path augmentation: given an $\alpha$-competitive algorithm for online path augmentation, we have an $O(\alpha\log n)$-competitive algorithm for online tree augmentation. However, 
Theorem~\ref{thm:det-path-lb} says that $\alpha = \Omega(\log n)$ for deterministic algorithms. To get around this lower bound, a more refined guarantee for online path augmentation is needed.

We need some notation to describe this refined guarantee. Suppose $\paths$ is a rooted path decomposition of $T$ and $\ell$ a link. For $Q \in \paths$, let $\pi_Q(\ell)$ be the link whose endpoints are endpoints of the path $P(\ell) \cap Q$; we call $\pi_Q(\ell)$ the \emph{projection} of $\ell$ onto $Q$. We say that $\ell$ is \emph{$Q$-rooted} if one of the endpoints of $\pi_Q(\ell)$ is the root of $Q$, and \emph{$Q$-non-rooted} otherwise. (See Fig. ~\ref{fig:projection} for an illustration.) The main ingredient for the refined guarantee is the next lemma.

\begin{figure}
  \centering
    \includegraphics[scale=0.4]{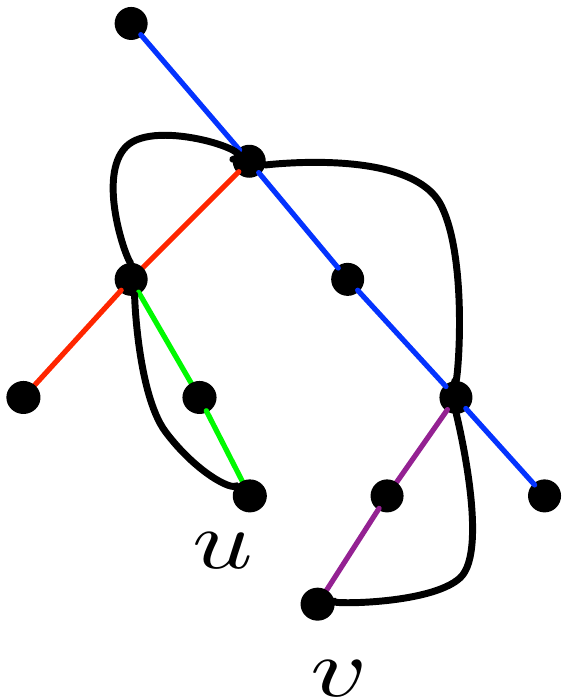}
  \caption{\small Illustration of the projections of the link $(u,v)$ onto the paths of the decompositon. Only the projection onto the blue path is non-rooted.}
  \label{fig:projection}
\end{figure}

\begin{lemma}
  \label{lem:decomp-rooted}
  Consider a tree $T$ and link $\ell = (u,v)$. Suppose $\paths$ is a rooted path decomposition of $T$. Then, there is at most one path $Q \in \paths$ for which $\ell$ is $Q$-non-rooted.
\end{lemma}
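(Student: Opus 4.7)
The plan is to fix the root $r \in T$ witnessing that $\paths$ is rooted, and let $w$ denote the least common ancestor of $u$ and $v$ in the resulting rooted tree, so that $P(\ell)$ decomposes as $P_u \cup P_v$, where $P_u$ is the vertical path from $u$ up to $w$ and $P_v$ is the vertical path from $v$ up to $w$.  I would classify the relevant paths $Q \in \paths$ --- those for which $P(\ell) \cap Q$ contains at least one edge, so that $\pi_Q(\ell)$ is nontrivial --- according to whether $Q$ contains $w$.

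For paths $Q$ with $w \notin Q$, I would show $\ell$ is $Q$-rooted.  Since $\paths$ is rooted, $Q$ is a vertical chain descending from its root $r_Q$.  If $r_Q$ were a strict ancestor of $w$, then avoiding $w$ would force $Q$ into a subtree of its root disjoint from the subtree rooted at $w$, and hence disjoint from $P(\ell)$; so $r_Q$ must instead be a strict descendant of $w$, lying (WLOG) in the $u$-side subtree of $w$.  If $r_Q$ did not lie on $P_u$, then $r_Q$ would sit in some subtree hanging off $P_u$, and as $Q$ only descends from $r_Q$, it would stay inside that hanging subtree and share no edge with $P(\ell)$, contradicting the assumption that $P(\ell) \cap Q$ contains an edge.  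Thus $r_Q \in P_u$, and since $Q$ is vertical, $r_Q$ is the top endpoint of the subpath $P(\ell) \cap Q$, making $\ell$ a $Q$-rooted link.

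For paths $Q$ containing $w$, let $e_u$ and $e_v$ be the two edges of $P(u,v)$ incident to $w$ (omitting one if $u = w$ or $v = w$).  A vertical path through $w$ uses at most one child-edge of $w$, and the only edges of $P(\ell)$ incident to $w$ are $e_u$ and $e_v$.  So for $P(\ell) \cap Q$ to contain an edge, $Q$ must contain $e_u$ or $e_v$, forcing $Q$ to be one of two distinct paths $Q_u \ni e_u$ and $Q_v \ni e_v$ (distinct because they descend into disjoint subtrees of $w$).  A path in $\{Q_u, Q_v\}$ has $w$ as its root unless it also contains the parent edge $e_0$ of $w$, and since $\paths$ partitions the edge set, $e_0$ belongs to exactly one path.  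Hence at most one of $Q_u, Q_v$ extends strictly above $w$; the other has root $w$, so the top endpoint of $P(\ell) \cap Q$ coincides with $r_Q$, making $\ell$ a $Q$-rooted link on that path.

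Combining the two cases, at most one $Q \in \paths$ renders $\ell$ a $Q$-non-rooted link.  The main subtlety lies in the geometric argument of the first case --- that $r_Q$ must land on $P_u$ (or $P_v$) whenever $Q$ avoids $w$ yet shares an edge with $P(\ell)$.  The second case reduces to the simple observation that the parent edge of $w$ can lie in at most one of $Q_u$ and $Q_v$.
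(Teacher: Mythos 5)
Your proof is correct and follows essentially the same underlying idea as the paper: the key observation in both is that a $Q$-non-rooted projection forces the least common ancestor $w$ of $u$ and $v$ to be an internal vertex of $Q$, and since a vertical path with $w$ internal must use the parent edge of $w$, edge-disjointness of $\paths$ allows at most one such path. You present the argument as a contrapositive (if $w \notin Q$, or if $w \in Q$ but is the root of $Q$, then $\ell$ is $Q$-rooted), with a more explicit case analysis, whereas the paper argues directly that non-rootedness forces $w$ to be strictly internal to $Q$; the content is the same.
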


\begin{proof}
  We claim that for any path $Q \in \paths$ such that $\ell$ is a non-rooted link, the least common ancestor $a$ of $u$ and $v$ must lie in $Q$ but is not an endpoint of $Q$, i.e. it lies strictly in the middle of $Q$. Since $\paths$ is an edge-disjoint decomposition of $T$, there can be at most one such path and thus the claim implies the lemma.

  We proceed to prove the claim. Let $u',v'$ be the endpoints of $\pi_Q(\ell)$. Since $\paths$ is a rooted path decomposition, either $u'$ is an ancestor of $v'$ or vice versa; suppose the former. We now argue that $u'$ is the least common ancestor of $u$ and $v$. There are two cases: (1) either $u'$ is an endpoint of $\ell$; (2) or there is a vertex $z$ of $P(u,v)$ adjacent to $u'$ but is not on $Q$. In case (1), we are done. Consider case (2). Since $u'$ is not an endpoint of $Q$, its parent must be on $Q$, and thus $z$ is a child of $u'$. Therefore, $u'$ is the least common ancestor of $u$ and $v$.
\end{proof}

Motivated by Lemma~\ref{lem:decomp-rooted}, we define the online rooted path augmentation problem. An instance of online rooted path augmentation consists of a rooted path $Q$ where the root $r$ is an endpoint of $Q$. For such an instance, we say that a link is rooted if one of its endpoints is $r$. Lemma~\ref{lem:decomp-rooted} suggests that we should devise an algorithm for online rooted path augmentation with the following refined guarantee.

\begin{definition}[Nice Solution]
  A solution $F$ for an instance of online rooted path augmentation is \emph{nice} if for any feasible solution $F^*$, we have $c(F) \leq O(1) c(R^*) + O(\log n) c(S^*)$ where $R^*$ is the set of rooted links and $S^*$ is the set of non-rooted links of $F^*$, respectively. An algorithm is nice if it always produces a nice solution.
\end{definition}

\begin{lemma}
  \label{lem:refined}
  Suppose that there exists a deterministic nice algorithm $\pathalg$ for online rooted path augmentation. Then, there exists an $O(\log n)$-competitive deterministic algorithm for online tree augmentation.
\end{lemma}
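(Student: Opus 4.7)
\medskip
\noindent\textbf{Proof plan for Lemma~\ref{lem:refined}.}

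The plan is to run independent copies of $\pathalg$ on the paths of a low-width rooted path decomposition. First, I would invoke Lemma~\ref{lem:pdecomp} to fix a rooted path decomposition $\paths$ of $T$ with $\width(\paths) = O(\log n)$. For each $Q \in \paths$, I would set up an instance of online rooted path augmentation whose ground path is $Q$ (with its path-root inherited from $\paths$), and whose link set is $\{\pi_Q(\ell) : \ell \in L,\ P(\ell) \cap Q \neq \emptyset\}$ with cost $c(\pi_Q(\ell)) := c(\ell)$. I would then run a dedicated copy $\pathalg_Q$ on each such instance. When an (elementary) request $e \in E$ arrives, it lies on a unique path $Q \in \paths$, so I forward it to $\pathalg_Q$; whenever $\pathalg_Q$ purchases a projected link $\pi_Q(\ell)$, I add the original link $\ell$ to the maintained solution $F$.

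Feasibility is immediate: $e \in Q$ gets covered by $\pathalg_Q$ via some projection $\pi_Q(\ell)$ with $e \in P(\pi_Q(\ell)) \subseteq P(\ell)$, so $\ell \in F$ covers $e$ in the tree instance. For the cost, fix any feasible tree-augmentation solution $F^*$ and, for each $Q \in \paths$, define $F^*_Q := \{\pi_Q(\ell) : \ell \in F^*,\ P(\ell) \cap Q \neq \emptyset\}$. Then $F^*_Q$ is feasible for the $Q$-instance against the set of requests seen on $Q$, and it partitions into the $Q$-rooted projections $R^*_Q$ and the $Q$-non-rooted projections $S^*_Q$. Applying the nice guarantee of $\pathalg_Q$ to this witness solution yields
\begin{equation*}
  c(\pathalg_Q) \;\le\; O(1)\, c(R^*_Q) \;+\; O(\log n)\, c(S^*_Q).
\end{equation*}

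Summing over $Q \in \paths$ and bounding $c(F) \le \sum_{Q \in \paths} c(\pathalg_Q)$ (the same link may be purchased on several paths, but this only overcounts), I would interchange the sums and account for each $\ell \in F^*$ separately. By Lemma~\ref{lem:decomp-rooted}, the link $\ell$ is $Q$-non-rooted for at most one path, contributing $c(\ell)$ to $\sum_Q c(S^*_Q)$ at most once. On the other hand, $\ell$ is $Q$-rooted only for paths $Q$ that $P(\ell)$ intersects, of which there are at most $\width(\paths) = O(\log n)$ by definition of the width; so $\ell$ contributes at most $O(\log n)\, c(\ell)$ to $\sum_Q c(R^*_Q)$. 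Putting the two bounds together gives
\begin{equation*}
  c(F) \;\le\; O(1) \cdot O(\log n)\, c(F^*) \;+\; O(\log n) \cdot c(F^*) \;=\; O(\log n)\, c(F^*),
\end{equation*}
which is the desired competitive ratio.

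I do not foresee any serious obstacle; the argument is an accounting exercise once the projection trick is in place. The only points that require a bit of care are (i) checking that $F^*_Q$ is genuinely a feasible witness for the $Q$-instance against only the requests that actually arrived on $Q$ (so that the nice guarantee applies), and (ii) cleanly partitioning each projection as rooted or non-rooted and invoking Lemma~\ref{lem:decomp-rooted} exactly once per link, which is what collapses the naive $O(\log^2 n)$ bound from a black-box reduction down to the target $O(\log n)$.
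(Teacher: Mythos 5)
Your proposal is correct and matches the paper's proof essentially line for line: fix a low-width rooted path decomposition via Lemma~\ref{lem:pdecomp}, run an independent copy of the nice algorithm on each projected path instance, then sum the nice guarantee over paths and invoke Lemma~\ref{lem:decomp-rooted} (at most one non-rooted projection per link) together with the width bound (at most $O(\log n)$ rooted projections per link). The only differences are minor presentational ones, e.g.\ you spell out feasibility of $F^*_Q$ and of the combined tree solution explicitly, which the paper leaves implicit.
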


\begin{proof}
  Here is our algorithm for general instances. Consider a general instance of online weighted tree augmentation with tree $T = (V,E)$, requests $e_1, \ldots, e_k \subseteq E$ and links $L = \binom{V}{2}$ with costs $c(\ell)$. Our algorithm works as follows. By Lemma~\ref{lem:pdecomp}, there exists a rooted path decomposition $\paths$ of $T$ with width $w = O(\log n)$. Now, each rooted path $Q \in \paths$ defines an instance of online rooted path augmentation: the links are $L_Q = \{\pi_Q(\ell) : \ell \in L\}$ where $\pi_Q(\ell)$ has cost $c(\ell)$, and the sequence of requests is exactly the subsequence of requests that lie on $Q$. So, our algorithm runs in parallel $|\paths|$ instantiations of $\pathalg$, one per rooted path $Q \in \paths$. When request $e_i$ arrives, if $e_i \in Q$ (since $e_i$ is an elementary request, it must lie on some path of $\paths$), then our algorithm uses the instantiation of $\pathalg$ on $Q$ to handle that request; in particular, if $\pathalg$ buys the projected link $\pi_Q(\ell)$, then our algorithm buys the link $\ell$.

Let us now analyze the competitive ratio of this algorithm. Let $F^*$ be a feasible solution. 
For $Q \in \paths$, we denote by $R^*_Q$, and $S^*_Q$ the subset of $F^*$ which is $Q$-rooted, and $Q$-non-rooted, respectively. Since $\pathalg$ is nice, we have that our algorithm's solution $F$ has cost
\[c(F) \leq \sum_{Q \in \paths} \left[ O(1) c(R^*_Q) + O(\log n) c(S^*_Q)\right] \leq O(\log n)c(F^*),\]
where the last inequality is because Lemma~\ref{lem:decomp-rooted} implies that each link of $F^*$ is in $S^*_Q$ for at most one $Q \in \paths$ and is in $R^*_Q$ for at most $w = O(\log n)$ paths $Q \in \paths$.
\end{proof}

In the next section, we construct a nice deterministic algorithm. Together with Lemma~\ref{lem:refined} this gives a deterministic $O(\log n)$-competitive algorithm for online tree augmentation, thus proving Theorem~\ref{thm:det-tree-ub}.

\section{A Nice Algorithm for Online Path Augmentation}
\label{sec:path}
In this section, we  devise a nice algorithm for online rooted path augmentation. In the following, we use the convention that the root of the path is the left endpoint of the path.

We begin by showing in Section~\ref{sec:minimal} that it suffices to consider simpler instances that we call \emph{minimal instances}. Then, we describe in Section~\ref{sec:niceness} how to prove niceness using an LP for the problem. Finally, we describe and analyze the algorithm in Sections~\ref{sec:algorithm} and \ref{sec:analysis}.

\subsection{Minimal Instances}
\label{sec:minimal}
The first step is a preprocessing step that simplifies the structure of the link set. In particular, we prune the instance so that it is of the following type.

\begin{definition}[Minimal Instances]
  A set of links $L$ and its costs $c$ are \emph{minimal} if they satisfy the following properties:
  \begin{enumerate}
  \item\label{def:minimal:class} for each class $j$, there is at most one rooted link and for every edge $e$, there are at most two links $\ell$ with $e \in P(\ell)$;
  \item\label{def:minimal:rooted} for any two rooted links $\ell$ and $\ell'$, if $\class(\ell) > \class(\ell')$, then $P(\ell) \supsetneq P(\ell')$.
  \end{enumerate}
  An instance is minimal if its links and costs are minimal.
\end{definition}

Given a set of links $L$ and its costs $c$, we prune $L$ to get a minimal subset of links $L' \subseteq L$ as follows. We begin by pruning the rooted links: while there exists a rooted link $\ell$ and a rooted link $\ell'$ of the same or lower class such that $P(\ell') \supseteq P(\ell)$, remove $\ell$. Then we prune the non-rooted links for each class $j$: let $L_j$ be the set of class $j$ links and $L'_j$ be a minimum-size subset of $L_j$ that covers $L_j$, i.e. $P(L'_j) \supseteq P(L_j)$; then, remove the links $L_j \setminus L'_j$. Such a minimum cover may be computed efficiently using an algorithm for minimum interval cover. By minimality, we have that for any edge $e$, there are at most two links $\ell, \ell' \in L'_j$ such that $e \in P(\ell) \cap P(\ell')$. The following claim shows that any link $\ell \in L_j$ that was pruned away can be replaced with at most three links of $L'_j$ and so restricting to $L'$ only causes the value of the optimal solution to increase by at most a factor of $3$.

\begin{claim}
  \label{clm:containment}
  For every link $\ell \in L_j \setminus L'_j$, there exists (at most) three links $\ell_1, \ell_2, \ell_3 \in L'_j$ with $P(\ell) \subseteq P(\ell_1) \cup P(\ell_2) \cup P(\ell_3)$.
\end{claim}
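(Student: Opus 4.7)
The plan is to cover $P(\ell)$ by at most three intervals from $L'_j$ using a greedy choice at each endpoint together with a single ``middle'' link whose uniqueness follows from the minimality of $|L'_j|$. Write $P(\ell) = [a,b]$, viewing $P(\ell)$ as a sub-interval of edges of the rooted path $Q$. Since $P(L'_j) \supseteq P(L_j) \supseteq P(\ell)$, every edge of $[a,b]$ is covered by some link of $L'_j$.

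First I would pick $\ell_1 \in L'_j$ to be a link covering the edge $a$ with maximum right endpoint $b_1$, and $\ell_3 \in L'_j$ to be a link covering the edge $b$ with minimum left endpoint $a_3$. If $b_1 \geq b$ then $\ell_1$ alone suffices; otherwise, if $P(\ell_1) \cup P(\ell_3) \supseteq [a,b]$, then two links suffice. In the remaining case, there exist edges of $[a,b]$ strictly between $b_1$ and $a_3$, which I will call middle edges, and it remains to cover these.

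Next I would show that any $\ell' \in L'_j$ covering a middle edge $e$ must satisfy $P(\ell') \subseteq [a,b]$: if $\ell'$ covered $a$, then the choice of $\ell_1$ would force the right endpoint of $\ell'$ to be at most $b_1 < e$, contradicting $e \in P(\ell')$; symmetrically $\ell'$ cannot cover $b$. Then I would invoke the minimality of $L'_j$ to show that at most one link of $L'_j$ is contained in $[a,b]$: if two distinct links $\ell', \ell'' \in L'_j$ both satisfied $P(\ell') \cup P(\ell'') \subseteq P(\ell)$, then $(L'_j \setminus \{\ell', \ell''\}) \cup \{\ell\}$ would be a strictly smaller subset of $L_j$ still covering $P(L_j)$, contradicting that $L'_j$ is of minimum size.

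Combining these two facts, the middle edges are all covered by a single link $\ell_2 \in L'_j$, and therefore $P(\ell_1) \cup P(\ell_2) \cup P(\ell_3) \supseteq P(\ell)$. I expect the main obstacle to be the replacement argument in the uniqueness step: it is crucial that $\ell$ itself lies in $L_j$, so that it is a legal substitute inside $L'_j$, and that $\ell_1, \ell_3$ are chosen as the extreme links at the endpoints of $[a,b]$, since otherwise a link covering a middle edge could extend past $a$ or $b$ and spoil the containment conclusion of the preceding step.
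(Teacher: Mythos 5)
Your proof is correct and rests on the same core ingredients as the paper's: choose extremal links at the two endpoints of $P(\ell)$, observe that anything covering the remaining ``middle'' portion must be contained inside $P(\ell)$, and then use the exchange argument (swap two contained links for $\ell$ itself, which is legal precisely because $\ell \in L_j$) to contradict the minimum size of $L'_j$. The only difference is presentational: you construct the three-link cover $\ell_1, \ell_2, \ell_3$ directly, whereas the paper argues by contradiction starting from a minimum-size sub-cover $S \subseteq L'_j$ of $P(\ell)$, takes the two links of $S$ with the extreme endpoints, shows the rest of $S$ is contained in $P(\ell)$, and swaps them all out for $\ell$. Both routes exploit the same structural fact and have comparable length; yours has the mild advantage of explicitly naming the three links, while the paper's avoids the small case analysis about whether $\ell_1$ and $\ell_3$ already cover everything. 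One thing worth making explicit in your write-up, which you gesture at but do not state: the link covering a middle edge cannot be $\ell_1$ or $\ell_3$ (their reach stops at $b_1$ and starts at $a_3$, respectively), so the unique contained link is genuinely a third link, and the case where one of $\ell_1, \ell_3$ is also contained in $[a,b]$ alongside a middle link is ruled out by the same exchange argument.
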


\begin{proof}
  Suppose, towards a contradiction, that there exists a link $\ell \in L_j \setminus L'_j$ whose path $P(\ell)$ can only be covered by at least four links of $L'_j$. Let $S \subseteq L'_j$ be a minimum-size cover of $P(\ell)$ and let $\ell_1$ be the link of $S$ whose left endpoint is leftmost and $\ell_2$ be the link of $S$ whose right endpoint is rightmost. Since $S$ is minimum-size, we have that every link $\ell' \in S \setminus \{\ell_1, \ell_2\}$ has its path $P(\ell') \subseteq P(\ell)$ (otherwise we can discard either $\ell_1$ or $\ell_2$ from $S$ to get a smaller cover of $P(\ell)$). Therefore, swapping $S \setminus \{\ell_1, \ell_2\}$ with $\ell$ gives a smaller cover of $P(L_j)$. However, this contradicts the fact that $L'_j$ is a minimum cover of $P(L_j)$.
\end{proof}

Given a subset of links $L' \subseteq L$,  we say that a solution $F \subseteq L'$ is \emph{nice for $L'$} if for any feasible solution $F' \subseteq L'$, we have $c(F) \leq O(1) c(R') + O(\log n) c(S')$ where $R'$ is the set of rooted links and $S'$ is the set of non-rooted links of $F'$, respectively. The following lemma says that it suffices to have a solution that is nice for a pruning of $L$ and thus it suffices to devise a nice algorithm for minimal instances. 

\begin{lemma}
  \label{lem:minimal}
  Let $L' \subseteq L$ be a pruning of $L$. Then, a solution that is nice for $L'$ is also nice for $L$.
\end{lemma}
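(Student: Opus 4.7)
The plan is to prove niceness for $L$ by reducing to niceness for $L'$ via a proxy solution. Given any feasible $F^* \subseteq L$, I will construct a feasible solution $F' \subseteq L'$ whose rooted and non-rooted parts $R', S'$ satisfy $c(R') \leq c(R^*) + c(S^*)$ and $c(S') \leq 3 c(S^*)$. Then, invoking the hypothesis that $F$ is nice for $L'$ gives
\[c(F) \leq O(1) c(R') + O(\log n) c(S') \leq O(1) c(R^*) + O(\log n) c(S^*),\]
which is exactly the nice-for-$L$ guarantee.

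To construct $F'$, I include every link of $F^* \cap L'$ unchanged, and replace each pruned link of $F^*$ by a small set of links from $L'$ covering the same path. For a pruned rooted link $\ell \in F^* \setminus L'$: the rooted-pruning rule only deletes $\ell$ in favor of a rooted link $\ell'$ with $P(\ell') \supseteq P(\ell)$ and $\class(\ell') \leq \class(\ell)$, and iterating this rule along the deletion chain yields a surviving rooted link $\ell^\ast \in L'$ with $P(\ell^\ast) \supseteq P(\ell)$ and $c(\ell^\ast) \leq c(\ell)$; I place $\ell^\ast$ into $F'$, contributing at most $c(\ell)$ to $c(R')$. For a pruned non-rooted link $\ell \in F^* \setminus L'$: by Claim~\ref{clm:containment} there exist at most three links of $L'_{\class(\ell)}$ whose paths together cover $P(\ell)$, each of cost $c(\ell)$; by property~\ref{def:minimal:class} of minimal instances, at most one of these three is rooted. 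Including all three in $F'$ therefore contributes at most $c(\ell)$ to $c(R')$ and at most $3 c(\ell)$ to $c(S')$. Feasibility is immediate since every link of $F^*$ is either retained or replaced by links whose paths cover it, so $P(F') \supseteq P(F^*) \supseteq R$. Summing the per-link contributions over $\ell \in R^* \cup S^*$ gives the claimed bounds on $c(R')$ and $c(S')$.

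The main subtlety to watch is that replacing a non-rooted link of $F^*$ can introduce a rooted link into $F'$, so $c(R')$ is not automatically bounded by $c(R^*)$ alone. The saving grace is that this extra rooted cost is at most $c(S^*)$, and once multiplied by the $O(1)$ rooted coefficient it is harmlessly absorbed into the $O(\log n) c(S^*)$ term on the right-hand side. A secondary care point is that the rooted-pruning rule must be applied transitively: a proposed replacement $\ell'$ for $\ell$ may itself have been pruned later, so one must chase the chain of deletions down to a link in $L'$, relying on the fact that both path-containment and class-non-increase are preserved at every step of the chain.
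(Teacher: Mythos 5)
Your proof is correct and follows essentially the same approach as the paper: construct a feasible surrogate $F' \subseteq L'$ by replacing each pruned link of $F^*$ with its surviving counterpart(s) from $L'$, then invoke niceness for $L'$ on $F'$. The only minor variations are that the paper replaces the entire rooted part $R^*$ by the single longest rooted link of $R^*$ (or its pruning survivor), and that the paper's pruning keeps $L'_j$ non-rooted so the Claim~\ref{clm:containment} replacements are automatically non-rooted, whereas you handle the possibility of a rooted replacement explicitly and absorb its cost into the $O(\log n)\,c(S^*)$ term — a harmless and slightly more robust reading.
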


\begin{proof}
  Suppose $F^* \subseteq L$ is a feasible solution. Let $R^* \subseteq F^*$ be its set of rooted links and $S^* \subseteq F^*$ be its set of non-rooted links. We now find a rooted link $\ell_r \in L'$ and a set of non-rooted links $S' \subseteq L'$ such that $F' = S' \cup \{\ell_r\}$ is feasible and moreover, $c(\ell_r) \leq O(1)c(R^*)$ and $c(S') \leq O(1) c(S^*)$. Let $\ell \in R^*$ be the link in $R^*$ whose right endpoint is furthest from the root $r$. If $\ell \in L'$, then we set $\ell_r = \ell$. Otherwise, by construction, there exists a rooted link $\ell' \in L'$ of no higher class than $\ell$ and $P(\ell') \supseteq P(\ell)$; then we set $\ell_r = \ell'$. Observe that in both cases, $P(\ell_r) \supseteq P(R^*)$ and that $c(\ell_r) \leq c(R^*)$. We now proceed with the construction of $S'$. We construct $S'$ iteratively, starting with $S' = \emptyset$. Then, for each class $j$, we iterate over each class-$j$ link $\ell \in S^*$: if $\ell \in L'_j$, then we add $\ell$ to $S'$ as well; otherwise, we add to $S'$ the (at most) three links guaranteed by Claim~\ref{clm:containment} $\ell_1, \ell_2, \ell_3 \in L'_j$ with $P(\ell) \subseteq P(\ell_1) \cup P(\ell_2) \cup P(\ell_3)$. Observe that $P(S') \supseteq P(S^*)$ and $c(S') \leq 3c(S^*)$. Since $P(S' \cup \{\ell_r\}) \supseteq P(F^*)$, we get that $F'$ is feasible.

Finally, a solution $F$ that is nice for $L'$ satisfies
\[c(F) \leq O(1)c(\{\ell_r\}) + O(\log n) c(S') \leq O(1)c(R^*) + O(\log n)c(S^*),\]
and so it is also nice for $L$.
\end{proof}

Henceforth, we will focus on devising a nice algorithm for minimal instances.

\subsection{Proving Niceness via the Dual LP}
\label{sec:niceness}
Our algorithm uses the standard LP formulation of the problem. Let $\R$ be the set of requests. The following are the primal and dual LPs, respectively.

\begin{equation}
\boxed{
  \label{lp:rP}
\begin{aligned}
  \mbox{minimize} \quad
  & \sum_{\ell \in L} x(\ell)c(\ell)\\
  \mbox{subject to}\quad
  & \sum_{\ell \in \cov(e)} x(\ell) \geq 1 &\quad \forall e \in \R
\end{aligned}
}
\end{equation}
\begin{equation}
\boxed{
  \label{lp:rD}
\begin{aligned}
  \mbox{maximize} \quad
  & \sum_{e \in \R} y(e)\\
  \mbox{subject to}\quad
  & \sum_{e \in P(\ell)} y(e) \leq c(\ell) &\quad \forall \ell \in L
\end{aligned}
}
\end{equation}
We say that a link $\ell$ is \emph{tight} with respect to a dual solution $y$ if $\sum_{e \in P(\ell)} y(e) = c(\ell)$.

The following lemma tells us how to use the dual to prove niceness.
\begin{lemma}
  \label{lem:dual-nice}
  Let $F$ be a solution. Suppose $y$ is a dual solution such that
  \begin{enumerate}
  \item $c(F) \leq O(1) \sum_e y(e)$,
  \item $\sum_{e \in P(\ell)} y(e) \leq O(\log n) c(\ell)$ for every non-rooted link $\ell$, and
  \item $\sum_{e \in P(\ell)} y(e) \leq O(1) c(\ell)$ for every rooted link $\ell$.
  \end{enumerate}
  Then, $F$ is a nice solution.
\end{lemma}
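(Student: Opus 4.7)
The plan is to chain the three given hypotheses via weak LP duality, but with the twist that $y$ is only an \emph{approximately} feasible dual solution and the slack is different for rooted and non-rooted links. First I would fix an arbitrary feasible integral solution $F^*$, split it as $F^* = R^* \cup S^*$ into its rooted and non-rooted parts, and show that $\sum_{e \in \R} y(e) \leq O(1) c(R^*) + O(\log n) c(S^*)$. Combining this with hypothesis (1), $c(F) \leq O(1) \sum_e y(e)$, then yields the nice-solution bound $c(F) \leq O(1) c(R^*) + O(\log n) c(S^*)$, which is exactly the definition of niceness.

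The key inequality to establish is therefore the weak-duality-style bound $\sum_{e \in \R} y(e) \leq \sum_{\ell \in F^*} \sum_{e \in P(\ell)} y(e)$. I would derive it from the integral feasibility of $F^*$: for each request $e \in \R$, there is at least one link $\ell \in F^*$ with $e \in P(\ell)$, so
\[
\sum_{e \in \R} y(e) \;\leq\; \sum_{e \in \R} y(e)\,|F^* \cap \cov(e)| \;=\; \sum_{\ell \in F^*} \sum_{e \in \R \cap P(\ell)} y(e) \;\leq\; \sum_{\ell \in F^*} \sum_{e \in P(\ell)} y(e),
\]
where the equality is a standard double-counting swap and the final inequality uses nonnegativity of $y$ (which I should note is implicit, since $y$ is a dual solution).

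Next I would partition the right-hand side according to whether each $\ell \in F^*$ is rooted or not:
\[
\sum_{\ell \in F^*} \sum_{e \in P(\ell)} y(e) \;=\; \sum_{\ell \in R^*} \sum_{e \in P(\ell)} y(e) \;+\; \sum_{\ell \in S^*} \sum_{e \in P(\ell)} y(e).
\]
Applying hypothesis (3) term-by-term to the first sum gives a bound of $O(1) c(R^*)$, and applying hypothesis (2) term-by-term to the second gives $O(\log n) c(S^*)$. Putting everything together with hypothesis (1) finishes the proof.

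There is no serious obstacle; the argument is essentially weak duality adapted to the two-level slack structure. The only thing worth being careful about is the very first step, where one must use \emph{integer} feasibility of $F^*$ (so that $|F^* \cap \cov(e)| \geq 1$ pointwise) rather than fractional feasibility, since the relaxation $\sum_{e \in \R} y(e) \leq \sum_{\ell \in F^*} \sum_{e \in P(\ell)} y(e)$ is no stronger than what weak duality already gives for a feasible $y$, and indeed we exploit that weak-duality argument in a slightly stronger, hypothesis-tailored form.
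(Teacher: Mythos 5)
Your proof is correct and follows essentially the same route as the paper's: fix a feasible $F^* = R^* \cup S^*$, use feasibility of $F^*$ to bound $\sum_e y(e)$ by $\sum_{\ell \in F^*}\sum_{e \in P(\ell)} y(e)$, split by rooted/non-rooted, and apply hypotheses (2) and (3) term-by-term before invoking hypothesis (1). Your extra care about integrality of $F^*$ is fine but automatic here, since by the definition of a nice solution $F^*$ is already a \emph{set} of links.
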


\begin{proof}
  Let $F^*$ be a feasible solution, $R^*$ be the subset of $F^*$ that is rooted and $S^*$  the subset that is non-rooted. We now show that $\sum_e y(e) \leq O(1) c(R^*) + O(\log n) c(S^*)$, which then implies that $c(F) \leq O(1) c(R^*) + O(\log n) c(S^*)$. Since we have a dual variable $y(e)$ for each request $e$ and $F^*$ is feasible, we have that
  \[\sum_e y(e) \leq \sum_{e \in P(R^*)}y(e) + \sum_{e \in P(S^*)}y(e).\] Using the fact that $\sum_{e \in P(\ell)} y(e) \leq O(1) c(\ell)$ for every rooted link $\ell$, we also have
  \[\sum_{e \in P(R^*)}y(e) \leq \sum_{\ell \in R^*}\sum_{e \in P(\ell)} y(e) \leq O(1)c(R^*).\]
  Similarly, we get that $\sum_{e \in P(S^*)}y(e) \leq O(\log n) c(S^*)$. Putting all of these together, we conclude that $\sum_e y(e) \leq O(1) c(R^*) + O(\log n) c(S^*)$, as desired.
\end{proof}

\subsection{Algorithm}
\label{sec:algorithm}
We now give some of the ideas behind our algorithm.

\paragraph*{An $O(\log n)$-competitive algorithm} 
First, we describe a simple algorithm that constructs a solution $F$ and a dual solution $y$ that satisfies $c(F) \leq O(1) \sum_e y(e)$ and $\sum_{e \in P(\ell)}y(e) \leq O(\log n) c(\ell)$ for \emph{every} link $\ell$. The algorithm maintains a maximal feasible dual solution $y$ and is as follows: when a request $e_i$ arrives, raise its dual variable $y(e_i)$ until some link $\ell$ with $e_i \in P(\ell)$ goes tight; add this link to $F$. There are two parts to the analysis. First, let $\Fhat$ be the set of links in $F$ that cost at least $\max_{\ell \in F}c(\ell)/n^2$. Since $|F| \leq n^2$, we get that $c(F) \leq 2c(\Fhat)$ so it suffices to bound $c(\Fhat)$. The second part of the analysis uses the following charging argument to bound $c(\Fhat)$: whenever we add a tight link $\ell$ to $\Fhat$, we charge its cost to the dual variables $y(e)$ for $e \in P(\ell)$. Let $\lambda(e)$ be the total number of links charged to $y(e)$ and $\yhat$ be the dual solution where $\yhat(e) = \lambda(e)y(e)$. We have $c(\Fhat) \leq O(1) \sum_e \lambda(e)y(e)$. Now observe that $\lambda(e) \leq O(\log n)$ because Property~\ref{def:minimal:class} of minimal instance implies that there can be at most 2 links  $\ell \in \Fhat$ with $e \in P(\ell)$ for a single cost class, and, by definition, $\Fhat$ can have at most $O(\log n)$ cost classes.  So, for each link $\ell$, we have
\[\sum_{e \in P(\ell)}\lambda(e)y(e) \leq O(\log n)\sum_{e \in P(\ell)}y(e) \leq O(\log n)c(\ell)\]
where the last inequality follows from the fact that $y$ is feasible.

\paragraph*{Saving the rooted links} A natural idea to ensure that $\sum_{e \in P(\ell)}\lambda(e)y(e) \leq O(1) c(\ell)$ for each rooted link $\ell$ is to modify the above algorithm to explicitly take into account the charging method as follows: after buying the tight link (we call this a type-1 link), if there is a rooted link $\ell'$ such that $\sum_{e \in P(\ell')}\lambda(e)y(e) > c(\ell')$, buy the one of highest class among such links (we call this a type-2 link). Moreover, we also modify the charging method to only charge each type-1 link $\ell$ to the dual variables $y(e)$ for $e \notin P(\ell')$ where $\ell'$ is the last type-2 link bought.

As we will see later, these  modifications allow us to argue that $\sum_{e \in P(\ell)}\lambda(e)y(e) \leq O(1) c(\ell)$ for each rooted link $\ell$. However, it also introduces a complication: it might be possible that for some type-1 link $\ell$, most of the dual variables $y(e)$ paying towards its cost have $e \in P(\ell')$ where $\ell'$ is the last type-2 link bought. Since the charging method only charges to dual variables $y(e)$ for $e \notin P(\ell')$, this would mean that it might charge an amount that is much less than the cost of $\ell'$.

\paragraph*{Fixing the complication} To fix the above issue, whenever we buy a type-2 link $\ell'$, we also buy all links $\ell''$ of class at most $\class(\ell')$ that crosses $\ell'$, i.e. $\emptyset \subsetneq P(\ell'') \cap P(\ell') \subsetneq P(\ell')$. Property~\ref{def:minimal:class} implies that the total cost of these links is at most $O(1)c(\ell')$. We call these links type-3 links. This ensures that later on, when we buy a type-1 link $\ell$, if $P(\ell) \cap P(\ell') \neq \emptyset$, then $\ell$ must be of higher class than $\ell'$ and thus most of its cost is paid for by dual variables $y(e)$ for $e \notin P(\ell')$.

We describe the complete algorithm formally in Algorithm~\ref{alg:path}. In Algorithm~\ref{alg:path}, we use $Z$ to keep track of $P(\ell)$ where $\ell$ is the last type-2 link bought so far ($Z = \emptyset$ if no type-2 link is bought yet). The links bought in Step~\ref{alg:type1}, \ref{alg:type2}, \ref{alg:type3}, are type-1, type-2, and type-3 links, respectively.

\subsection{Analysis of Algorithm}
\label{sec:analysis}
We now prove that Algorithm~\ref{alg:path} is nice. Let $F_1, F_2, F_3 \subseteq F$ be the sets of type-1, type-2 and type-3 links, respectively. The proof consists of three steps. 
First, we show that $c(F) \leq O(1)c(F_1)$ (Lemma~\ref{lem:alg-cost}) and thus it suffices to bound the cost of type-1 links. Then, we construct a dual solution $\yhat$ that accounts for the cost of type-1 links (Lemma~\ref{lem:alg-yhat}). This shows that $\yhat$ satisfies the first condition of Lemma~\ref{lem:dual-nice}. 
Finally, Lemmas~\ref{lem:rooted} and \ref{lem:non-rooted} show that $\yhat$ satisfies the remaining conditions of Lemma~\ref{lem:dual-nice}.

\begin{algorithm}
\caption{Nice algorithm for online rooted path augmentation}
\begin{algorithmic}[1]
  \label{alg:path}
  \STATE $F \leftarrow \emptyset; y \leftarrow 0; \lambda \leftarrow 0; Z \leftarrow \emptyset$
  \FOR {each unsatisfied request $e_i$}
  \STATE Increase $y(e_i)$ until some link $\ell$ with $e_i \in P(\ell)$ goes tight
  \STATE\label{alg:type1} Add such a link $\ell$ to $F$
  \FOR {each $e \in P(\ell) \setminus Z$ such that $y(e) > 0$}
  \STATE $\lambda(e) \leftarrow \lambda(e) + 1$
  \ENDFOR
  \IF {there exists a rooted link $\ell \notin F$ such that $\sum_{e \in P(\ell)} \lambda(e)y(e) \geq c(\ell)$}
  \STATE\label{alg:type2} Among such links, add to $F$ the link $\ell$ of highest class
  \FOR {$j \leq \class(\ell)$}
  \STATE\label{alg:type3} Add to $F$ all class-$j$ links $\ell'$ that cross $\ell$, i.e. $\emptyset \subsetneq P(\ell') \cap P(\ell) \subsetneq P(\ell)$
  \ENDFOR
  \STATE $Z \leftarrow P(\ell)$
  \ENDIF
  \ENDFOR
\end{algorithmic}
\end{algorithm}

For each type-$1$ link $\ell \in F_1$, define $C(\ell)$ to be the set of edges $e$ such that $\lambda(e)$ was incremented during the iteration that $\ell$ was assigned to $F_1$, i.e. each dual variable $y(e)$ for $e \in C(\ell)$ contributes towards paying $c(\ell)$. Observe that $\lambda(e) = |\{\ell : e \in C(\ell)\}|$ and $C(\ell) \subseteq P(\ell)$.

\begin{proposition}
  \label{prop:alg}
  Algorithm~\ref{alg:path} satisfies the following properties. Let $Z_i$ and $\lambda_i$ denote $Z$ and $\lambda$ at the end of the $i$-th iteration. Then, for every iteration $i$, we have
  \begin{enumerate}
  \item\label{prop:alg:Z} $Z_i \supseteq Z_{i-1}$;
  \item\label{prop:alg:dual} if $y(e_i) > 0$, then $\lambda_i(e_i) > 0$.
  \end{enumerate}
\end{proposition}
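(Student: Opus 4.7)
My plan is to handle the two properties separately, in each case by tracing which line of Algorithm~\ref{alg:path} can modify the relevant quantity.

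For Property~\ref{prop:alg:Z}, I note that $Z$ changes only at the line ``$Z \leftarrow P(\ell)$'' inside the type-2 block, so if no type-2 link is added during iteration $i$ then $Z_i = Z_{i-1}$. Otherwise some type-2 link $\ell$ is added in iteration $i$ and $Z_i = P(\ell)$; if no type-2 link had been added earlier then $Z_{i-1} = \emptyset$ and we are done, so suppose $\ell''$ is the most recently added type-2 link, giving $Z_{i-1} = P(\ell'')$. I need to show $P(\ell) \supseteq P(\ell'')$. The selection rule for Step~\ref{alg:type2} forces $\ell \notin F$ while $\ell'' \in F$, so $\ell \neq \ell''$; both links are rooted, and Property~\ref{def:minimal:class} of minimality allows at most one rooted link per class, so $\class(\ell) \neq \class(\ell'')$. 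If $\class(\ell) > \class(\ell'')$, Property~\ref{def:minimal:rooted} directly yields $P(\ell) \supsetneq P(\ell'')$. I rule out the remaining case $\class(\ell) < \class(\ell'')$ as follows: minimality then gives $P(\ell) \subsetneq P(\ell'')$, which immediately implies the crossing condition $\emptyset \subsetneq P(\ell) \cap P(\ell'') \subsetneq P(\ell'')$, so at the iteration when $\ell''$ was added, Step~\ref{alg:type3} (with $j = \class(\ell) \leq \class(\ell'')$) would have placed $\ell$ into $F$. Since the algorithm never removes links, this contradicts $\ell \notin F$ at iteration $i$.

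For Property~\ref{prop:alg:dual}, I argue directly from the algorithm. The value $y(e_i)$ is modified only in iteration $i$ and only when $e_i$ is unsatisfied at the start of that iteration (the outer loop iterates over unsatisfied requests). So $y(e_i) > 0$ means we entered the inner block and added a tight type-1 link $\ell$ with $e_i \in P(\ell)$. The next loop then increments $\lambda(e)$ for every $e \in P(\ell) \setminus Z$ with $y(e) > 0$, using the value of $Z$ at the start of the iteration, namely $Z_{i-1}$. Since $e_i \in P(\ell)$ and $y(e_i) > 0$, it suffices to show $e_i \notin Z_{i-1}$. If $Z_{i-1} = \emptyset$ this is immediate; otherwise $Z_{i-1} = P(\ell'')$ for some type-2 link $\ell'' \in F$ added earlier, but $\ell''$ covers every edge of $P(\ell'')$, so $e_i \in P(\ell'')$ would mean $e_i$ was already satisfied -- contradicting the fact that $e_i$ reached the body of the outer loop.

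The main subtlety lies in Property~\ref{prop:alg:Z}, specifically the exclusion of the case $\class(\ell) < \class(\ell'')$. The key observation is that Step~\ref{alg:type3} sweeps in \emph{all} class-$j$ crossing links for $j \leq \class(\ell'')$ (rooted links included), and that Property~\ref{def:minimal:rooted} forces any rooted link of smaller class to be strictly contained in $P(\ell'')$ and hence automatically satisfy the crossing condition. Once this observation is in place, both properties reduce to routine bookkeeping against the pseudocode.
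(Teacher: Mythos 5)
Your proof is correct and takes the same approach as the paper, but supplies detail the paper leaves implicit. The paper's one-line justification of Property~\ref{prop:alg:Z} (``follows from Property~\ref{def:minimal:rooted} of minimal instances'') does not say why a newly selected type-2 link must have strictly larger class than the previously installed one; your observation that minimality forces the lower-class rooted link to be strictly contained in (hence cross) the old type-2 link, so the type-3 sweep would already have placed it in $F$, is exactly the missing step, and it is the real content of that property. Your argument for Property~\ref{prop:alg:dual} matches the paper's reasoning.
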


\begin{proof}
  The first follows from Property~\ref{def:minimal:rooted} of minimal instances. The second follows from the fact that in the iteration that $e_i$ arrives, since it is unsatisfied, it must not be contained in $Z$. Let $\ell$ be the link added to $F$ in that iteration. Since $e_i \in P(\ell) \setminus Z$ and $y(e_i) > 0$, we have that $\lambda(e_i)$ is increased by $1$ during the iteration and thus $\lambda_i(e_i) > 0$.
\end{proof}

\begin{lemma}
  \label{lem:alg-cost}
  $c(F) \leq O(1) c(F_1)$.
\end{lemma}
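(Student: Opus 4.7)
The plan is to partition $F = F_1 \cup F_2 \cup F_3$ (each link counted by the step in which it was first added) and bound the cost of each piece against $c(F_1)$. Specifically I will show $c(F_2) \leq O(c(F_1))$ and $c(F_3) \leq O(c(F_2))$; the lemma then follows by combining the bounds.

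\medskip

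\noindent\textbf{Bounding $c(F_2)$.} Let $\ell_1, \ldots, \ell_k$ be the type-2 links in the order they are added. Immediately after $\ell_m$ is added in step~\ref{alg:type2}, we have $Z = P(\ell_m)$, so Proposition~\ref{prop:alg}(\ref{prop:alg:Z}) gives $P(\ell_{m+1}) \supseteq P(\ell_m)$, and this containment is strict because $\ell_m$ and $\ell_{m+1}$ are distinct rooted links that share the root $r$ as an endpoint. Together with Property~\ref{def:minimal:rooted} (rooted links are nested by class) and Property~\ref{def:minimal:class} (at most one rooted link per class), this forces $\class(\ell_{m+1}) > \class(\ell_m)$, so $c(\ell_{m+1}) \geq 2c(\ell_m)$. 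Summing the geometric series gives $c(F_2) \leq 2c(\ell_k)$. To bound $c(\ell_k)$, I appeal to the triggering condition: at the iteration in which $\ell_k$ is added, $c(\ell_k) \leq \sum_{e \in P(\ell_k)} \lambda(e)y(e)$. Writing $\lambda(e) = |\{\ell'' \in F_1 : e \in C(\ell'')\}|$, swapping sums, and using $C(\ell'') \subseteq P(\ell'')$ together with the dual feasibility $\sum_{e \in P(\ell'')} y(e) \leq c(\ell'')$ that the algorithm maintains,
\[c(\ell_k) \;\leq\; \sum_{\ell'' \in F_1}\sum_{e \in P(\ell_k) \cap C(\ell'')} y(e) \;\leq\; \sum_{\ell'' \in F_1}\sum_{e \in P(\ell'')} y(e) \;\leq\; \sum_{\ell'' \in F_1} c(\ell'') \;=\; c(F_1).\]
Hence $c(F_2) \leq 2 c(F_1)$.

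\medskip

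\noindent\textbf{Bounding $c(F_3)$.} I will show that the total cost of type-3 links added during the iteration in which $\ell_m$ becomes type-2 is $O(c(\ell_m))$, so that summing over $m$ yields $c(F_3) \leq O(c(F_2)) \leq O(c(F_1))$. For each class $j \leq \class(\ell_m)$, step~\ref{alg:type3} adds the class-$j$ links crossing $\ell_m$. Since $\ell_m$ is rooted, any such crossing link must cover the boundary edge adjacent to $P(\ell_m)$ on the non-root side. Property~\ref{def:minimal:class} then bounds the number of class-$j$ links covering that fixed edge by $2$, so the total class-$j$ contribution is $O(2^j)$. Summing over $j \leq \class(\ell_m)$ gives $O(2^{\class(\ell_m)}) = O(c(\ell_m))$, as desired. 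Combining everything, $c(F) = c(F_1) + c(F_2) + c(F_3) = O(c(F_1))$.

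\medskip

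The step I expect to be the main obstacle is the $c(F_3)$ bound: the counting needs the ``cross'' condition in step~\ref{alg:type3} to force each crossing class-$j$ link to cover a boundary edge of $P(\ell_m)$, so that Property~\ref{def:minimal:class} gives an $O(1)$ bound per class rather than a bound scaling with $|P(\ell_m)|$. This is what makes the geometric sum over classes telescope to $O(c(\ell_m))$. The $c(F_2)$ bound is comparatively routine once one chains Proposition~\ref{prop:alg}(\ref{prop:alg:Z}) with Property~\ref{def:minimal:rooted} to get strictly increasing type-2 classes.
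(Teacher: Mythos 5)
Your proof takes essentially the same route as the paper's: you split $F$ into types, bound $c(F_2)$ via the trigger condition $c(\ell_k)\le\sum_{e\in P(\ell_k)}\lambda(e)y(e)$ followed by the sum-swap $\sum_e\lambda(e)y(e)\le c(F_1)$, and bound $c(F_3)$ by a per-class count against $c(F_2)$; the paper does the same (first bounding $c(F_3)\le O(1)c(F_2)$, then $c(F_2)\le 2\sum_e\lambda(e)y(e)\le 2c(F_1)$). Your unpacking of why the type-2 classes are strictly increasing (via Proposition~\ref{prop:alg}(\ref{prop:alg:Z}) plus Properties~\ref{def:minimal:class} and~\ref{def:minimal:rooted}) is exactly what underlies the paper's remark that "there is at most one rooted link of each class."

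One thing to watch, which you yourself flag as the crux: your per-class bound on $c(F_3)$ rests on the assertion that any class-$j$ link bought in step~\ref{alg:type3} must cover the boundary edge of $P(\ell_m)$ on the non-root side. That is true only if "cross" is read as a genuine crossing, i.e.\ $P(\ell')\cap P(\ell_m)\subsetneq P(\ell')$ \emph{and} $\subsetneq P(\ell_m)$. The condition written in step~\ref{alg:type3} is only $\emptyset\subsetneq P(\ell')\cap P(\ell_m)\subsetneq P(\ell_m)$, which also admits links with $P(\ell')\subsetneq P(\ell_m)$ that do not touch the boundary edge; under that literal reading, Property~\ref{def:minimal:class} alone does not cap the number of crossing class-$j$ links at two. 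The paper's own proof invokes the same "at most two per class" claim without spelling this out, so this is a shared imprecision rather than a gap particular to your argument; but if you want the boundary-edge step to be airtight, you should say explicitly that step~\ref{alg:type3} should be read as buying (a) the strictly-crossing links of class $j\le\class(\ell_m)$ and (b) the (unique) rooted link of each class $j<\class(\ell_m)$, the latter being what keeps $Z$ monotone in Proposition~\ref{prop:alg}.
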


\begin{proof}
  We will show that $c(F_3) \leq O(1)c(F_2)$, that $c(F_2) \leq O(1)\sum_e \lambda(e)y(e)$ and that $\sum_e \lambda(e)y(e) \leq c(F_1)$.   Let $\ell_r$ be the last type-2 link bought. We have that $c(\ell_r) \leq \sum_{e \in P(\ell_r)} \lambda(e)y(e)$ by construction. Moreover, since $c(\ell_r) \geq c(\ell)$ for every $\ell \in F_2$ and there is at most one rooted link of each class, we get that $c(F_2) \leq 2 c(\ell_r)$. Thus, we get that $c(F_2) \leq 2 \sum_{e \in P(\ell_r)} \lambda(e)y(e)$. For each type-2 link $\ell$ bought, we buy at most two type-3 links per class $j \leq \class(\ell)$ because of Property~\ref{def:minimal:class} of minimal instances. Therefore, we have $c(F_3) \leq 2 c(F_2) \leq 4 \sum_{e \in P(\ell_r)} \lambda(e)y(e)$.

  Finally, we show that $\sum_e \lambda(e)y(e) \leq c(F_1)$. Since $\lambda(e) = |\{\ell : e \in C(\ell)\}|$, we have \[\sum_e \lambda(e) y(e) = \sum_{\ell \in F_1} \left(\sum_{e \in C(\ell)} y(e)\right).\]
Now, since $C(\ell) \subseteq P(\ell)$ and $y$ is feasible, we get
\[\sum_{e \in C(\ell)}y(e) \leq \sum_{e \in P(\ell)}y(e) \leq c(\ell).\] 
Combining the previous two inequalities gives us that $\sum_e \lambda(e)y(e) \leq c(F_1)$.
\end{proof}

Let $\cmax = \max_{\ell \in F_1} c(\ell)$. Define $\Fhat_1 = \{\ell \in F_1 : c(\ell) \geq \cmax/n^2\}$ and $\lambdah(e) = |\{\ell \in \Fhat_1 : e \in C(\ell)\}|$. We now show that $F$ and the dual solution $\yhat$ where $\yhat(e) = \lambdah(e)y(e)$ satisfies the conditions of Lemma~\ref{lem:dual-nice}.

\begin{lemma}
  \label{lem:alg-yhat}
  $c(F_1) \leq O(1) \sum_e \lambdah(e)y(e)$.
\end{lemma}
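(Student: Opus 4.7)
The plan is to reduce the lemma to showing that for every type-$1$ link $\ell \in F_1$, the inequality $c(\ell) \leq 2 \sum_{e \in C(\ell)} y(e)$ holds. Given such a per-link bound, summing over $\Fhat_1$ yields $c(\Fhat_1) \leq 2 \sum_{\ell \in \Fhat_1} \sum_{e \in C(\ell)} y(e) = 2 \sum_e \lambdah(e) y(e)$, and a separate truncation argument shows $c(F_1) \leq O(1) \cdot c(\Fhat_1)$. For the truncation, I would note that on a path with $n$ vertices there are only $O(n^2)$ distinct links, so $|F_1| \leq n^2$; each link in $F_1 \setminus \Fhat_1$ costs less than $\cmax/n^2$, so $c(F_1 \setminus \Fhat_1) < \cmax \leq c(\Fhat_1)$ (the last inequality using that a link of cost $\cmax$ lies in $\Fhat_1$), giving $c(F_1) \leq 2 c(\Fhat_1)$.

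The heart of the proof is the per-link bound. Fix $\ell \in F_1$, added in some iteration $i$, and let $\ell_r$ be the last type-$2$ link added strictly before iteration $i$ (with $\ell_r$ absent if no type-$2$ link has yet been bought). By Proposition~\ref{prop:alg}, Part~\ref{prop:alg:Z}, the set $Z$ is monotone and is modified only when a type-$2$ link is added, so $Z$ just before iteration $i$ equals $P(\ell_r)$ (or $\emptyset$). Since $\ell$ went tight upon being added, $c(\ell) = \sum_{e \in P(\ell)} y(e)$; splitting this sum at $P(\ell_r)$ and recalling that edges with $y(e)=0$ do not contribute, we obtain
\[
c(\ell) \;=\; \sum_{e \in C(\ell)} y(e) \;+\; \sum_{e \in P(\ell) \cap P(\ell_r)} y(e),
\]
with the second sum equal to $0$ when $\ell_r$ is absent or $P(\ell) \cap P(\ell_r) = \emptyset$ (and the lemma is immediate in that case). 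In the remaining situation, dual feasibility gives $\sum_{e \in P(\ell) \cap P(\ell_r)} y(e) \leq \sum_{e \in P(\ell_r)} y(e) \leq c(\ell_r)$, so it suffices to prove $c(\ell_r) \leq c(\ell)/2$.

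The bound $c(\ell) \geq 2 c(\ell_r)$ (equivalently, $\class(\ell) > \class(\ell_r)$, since costs are powers of $2$) follows by a two-case analysis that exploits the type-$3$ mechanism. Crucially, $\ell$ was added as a type-$1$ link in iteration $i$, hence $\ell \notin F$ just before iteration $i$; in particular $\ell$ was \emph{not} one of the type-$3$ links bought alongside $\ell_r$. If $P(\ell) \not\supseteq P(\ell_r)$ (and $P(\ell) \cap P(\ell_r) \neq \emptyset$), then $\ell$ satisfied the type-$3$ eligibility condition $\emptyset \subsetneq P(\ell) \cap P(\ell_r) \subsetneq P(\ell_r)$ at the moment $\ell_r$ was added, so its absence from $F$ forces $\class(\ell) > \class(\ell_r)$. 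If instead $P(\ell) \supseteq P(\ell_r)$, then because $\ell_r$ is rooted (so its left endpoint is the leftmost vertex $r$), $\ell$ must also be rooted; Property~\ref{def:minimal:rooted} of minimal instances then forces $\class(\ell) > \class(\ell_r)$.

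The main obstacle is the second subcase, $P(\ell) \supseteq P(\ell_r)$: here $\ell$ is not a type-$3$ candidate, so the algorithmic mechanism alone does not separate the classes, and the minimality property of rooted links must be invoked. Everything else is bookkeeping: collating the dual split, the trivial case $P(\ell) \cap P(\ell_r) = \emptyset$, and the small-cost truncation that lets us replace $F_1$ by $\Fhat_1$.
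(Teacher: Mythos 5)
Your proof is correct and follows essentially the same approach as the paper's: charge each type-$1$ link's cost to the dual variables in $C(\ell)$, and bound the loss from the truncation at $Z = P(\ell_r)$ by a factor of $2$ via the class separation $\class(\ell) > \class(\ell_r)$, which together with dual feasibility gives $\sum_{e \in P(\ell) \cap Z} y(e) \leq c(\ell_r) \leq c(\ell)/2$. One small point in your favor: your two-case analysis of the class separation is more careful than the paper's, which justifies $\class(\ell) > \class(\ell_r)$ only by ``otherwise $\ell$ would have been bought as a type-$3$ link''---this covers the crossing case $\emptyset \subsetneq P(\ell) \cap P(\ell_r) \subsetneq P(\ell_r)$ but silently skips the case $P(\ell) \supsetneq P(\ell_r)$ (where $\ell$ is necessarily rooted), which you correctly close using Property~\ref{def:minimal:rooted} of minimal instances.
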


\begin{proof}
  Observe that $c(F_1) \leq 2c(\Fhat_1)$ so it suffices to prove that
  \begin{equation}
    c(\Fhat_1) \leq O(1) \sum_e \lambdah(e)y(e).\label{eq:Fhat1}
  \end{equation}
  We now show that this inequality holds at the end of each iteration of the algorithm. Consider an iteration in which the current request $e_i$ is not already covered and suppose $\ell \in \Fhat_1$ is the type-1 link bought in this iteration. The LHS of Inequality~\eqref{eq:Fhat1} increases by $c(\ell)$ in this iteration. We now show that $\sum_e \lambdah(e)y(e)$ increases by at least $c(\ell)/2$. In this iteration, $\lambdah(e)$ increases by $1$ for every $e \in P(\ell) \setminus Z$ and $y(e) > 0$, and so $\sum_e \lambdah(e)y(e)$ increases by exactly $\sum_{e \in P(\ell) \setminus Z} y(e)$.

In the remainder of the proof, we show that $\sum_{e \in P(\ell) \setminus Z} y(e) \geq c(\ell)/2$. If $P(\ell) \cap Z = \emptyset$, then $\sum_{e \in P(\ell) \setminus Z} y(e) = \sum_{e \in P(\ell)} y(e) = c(\ell)$ since $\ell$ is tight. Now suppose $P(\ell) \cap Z \neq \emptyset$. Let $\ell'$ be the type-2 link such that $Z = P(\ell')$. Since $P(\ell) \cap P(\ell') \neq \emptyset$, it must be the case that $\ell$ is of type higher than $\class(\ell')$. This is because otherwise, $\ell$ would have been bought earlier as a type-3 link in the same iteration as $\ell'$. But then since $e_i \in P(\ell)$, it would contradict the assumption that $e_i$ is not already covered at the start of the current iteration. Thus, $\class(\ell) > \class(\ell')$ and so $c(\ell) \geq 2c(\ell')$. So, we now have
\[\sum_{e \in P(\ell) \setminus Z}y(e) \geq \sum_{e \in P(\ell)} y(e) - \sum_{e \in P(\ell')} y(e) \geq c(\ell) - c(\ell') \geq c(\ell)/2,\]
where the second last inequality follows from the fact that $y$ is a feasible dual and $\ell$ is tight. Therefore, Inequality~\eqref{eq:Fhat1} holds at the end of each iteration, as desired.
\end{proof}

Lemmas~\ref{lem:alg-cost} and \ref{lem:alg-yhat} imply that $c(F) \leq O(1) \sum_e \yhat(e)$.

\begin{lemma}
  \label{lem:non-rooted}
  For each non-rooted link $\ell$, we have $\sum_{e \in P(\ell)} \lambdah(e)y(e) \leq O(\log n) c(\ell)$.
\end{lemma}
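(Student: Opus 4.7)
The plan is to establish a uniform pointwise upper bound $\lambdah(e) \leq O(\log n)$ on every edge $e$, and then derive the claim by summing against $y$ and invoking dual feasibility. Note that the non-rooted hypothesis on $\ell$ is not actually needed for this bound; the same inequality will hold for any link. The delicate $O(1)$ bound that is required for \emph{rooted} links is the subject of Lemma~\ref{lem:rooted}, not this one.

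First, I would count the classes that can appear in $\Fhat_1$. By the definition of $\Fhat_1$, every link $\ell' \in \Fhat_1$ has cost in the range $[\cmax/n^2, \cmax]$, and since link costs are powers of $2$, at most $\log_2(n^2) + 1 = O(\log n)$ distinct classes are represented in $\Fhat_1$. Next, I would use Property~\ref{def:minimal:class} of minimal instances: for each class $j$ and each edge $e$, there are at most two class-$j$ links $\ell'$ with $e \in P(\ell')$. Since $C(\ell') \subseteq P(\ell')$, each class can contribute at most $2$ to the count $\lambdah(e) = |\{\ell' \in \Fhat_1 : e \in C(\ell')\}|$. Multiplying the per-class bound of $2$ by the $O(\log n)$ bound on the number of classes yields $\lambdah(e) \leq O(\log n)$ for every edge $e$.

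Given this pointwise bound, the lemma follows immediately from feasibility of $y$ for the dual LP~\eqref{lp:rD}:
\begin{equation*}
  \sum_{e \in P(\ell)} \lambdah(e)\, y(e) \;\leq\; O(\log n) \sum_{e \in P(\ell)} y(e) \;\leq\; O(\log n)\, c(\ell).
\end{equation*}

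I do not anticipate a real obstacle here, as the argument is a direct combinatorial counting followed by a one-line application of dual feasibility. The only points that need a small amount of care are (i) parsing Property~\ref{def:minimal:class} correctly so that the ``at most two'' bound is interpreted per class rather than across all classes, and (ii) keeping the cost-range accounting for $\Fhat_1$ consistent with the thresholding at $\cmax/n^2$ that defines it.
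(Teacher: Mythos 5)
Your proof is correct and takes essentially the same route as the paper's: a pointwise bound $\lambdah(e) \le O(\log n)$ obtained by combining the $O(\log n)$ bound on the number of cost classes in $\Fhat_1$ with the per-class ``at most two covering links'' guarantee from Property~\ref{def:minimal:class}, followed by summing against $y$ and invoking dual feasibility. The use of $C(\ell') \subseteq P(\ell')$ to transfer the covering bound to $\lambdah$ is also exactly what the paper does, just stated a bit more explicitly.
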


\begin{proof}
  Property~\ref{def:minimal:class} of minimal instances implies that for each $j$, there are at most two links $\ell' \in \Fhat_1$ of class $j$ with $e \in C(\ell')$. Since each link in $\Fhat_1$ has cost between $\cmax/n^2$ and $\cmax$ and link costs are powers of 2, we have that $\lambdah(e) \leq O(\log n)$. Thus we get that $\sum_{e \in P(\ell)}\lambdah(e)y(e) \leq O(\log n)\sum_{e \in P(\ell)} y(e) \leq O(\log n)c(\ell)$, where the last inequality follows from the fact that $y$ is a feasible dual.
\end{proof}

\begin{lemma}
  \label{lem:rooted}
  For each rooted link $\ell$, we have $\sum_{e \in P(\ell)} \lambdah(e)y(e) \leq O(1) c(\ell)$.
\end{lemma}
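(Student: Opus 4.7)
The plan is to split the argument into two main cases based on whether $\ell$ ends up in $F$, and then to analyze the growth of $\sum_{e \in P(\ell)} \lambda(e)y(e)$ over the course of the algorithm (which upper-bounds $\sum_{e \in P(\ell)} \lambdah(e)y(e)$ because $\lambdah \le \lambda$).

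If $\ell \notin F$ at termination, I would appeal to the loop invariant maintained by Step~4 of Algorithm~\ref{alg:path}: at the end of every iteration, every rooted $\ell' \notin F$ satisfies $\sum_{e \in P(\ell')} \lambda(e)y(e) < c(\ell')$, because otherwise the highest-class such link would have been added as a type-2 link. In particular the bound holds for our $\ell$, giving $\sum_{e \in P(\ell)} \lambdah(e)y(e) \le \sum_{e \in P(\ell)} \lambda(e)y(e) < c(\ell)$.

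If $\ell \in F$, added in some iteration $i_0$, I would bound the sum in two stages. First, the invariant just quoted, applied at the end of iteration $i_0-1$, shows $\sum_{e \in P(\ell)} \lambda_{i_0-1}(e)y_{i_0-1}(e) < c(\ell)$. Second, I bound the increase during iteration $i_0$. The crucial observation is that since each elementary request arrives only once, $y_{i_0-1}(e_{i_0}) = 0$, and hence $\lambda_{i_0-1}(e_{i_0}) = 0$ as well (because $\lambda$ is only incremented at edges with strictly positive $y$). Therefore Step~2's dual update contributes nothing to the quantity $\sum_{e \in P(\ell)} \lambda(e)y(e)$, and Step~3's $\lambda$-update contributes at most $\sum_{e \in P(\ell) \cap P(\ell_{t1}^{(i_0)}) \setminus Z_{i_0-1}} y_{i_0}(e) \le \sum_{e \in P(\ell)} y_{i_0}(e) \le c(\ell)$ by feasibility of $y$ at $\ell$. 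So at the end of iteration $i_0$ the sum is strictly below $2c(\ell)$.

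Finally, I would argue that no further growth can accrue on $P(\ell)$. Since $\ell \in F$ covers every edge of $P(\ell)$, later requests lying on $P(\ell)$ are already satisfied and produce no dual updates, so it suffices to block future $\lambda$-updates on $P(\ell)$. For this I split on the type of $\ell$: (i) if $\ell$ is type-2 then $Z \leftarrow P(\ell)$, and Proposition~\ref{prop:alg}(1) guarantees $Z \supseteq P(\ell)$ forever after; (ii) if $\ell$ is type-3 then $\ell$ crosses the type-2 link $\ell_{t2}$ of iteration $i_0$, and since $\ell$ and $\ell_{t2}$ are both rooted with $\class(\ell) \le \class(\ell_{t2})$, Property~\ref{def:minimal:rooted} forces $P(\ell) \subsetneq P(\ell_{t2}) = Z$, again blocking future updates; (iii) if $\ell$ is type-1 then I would leverage the chain structure of rooted links (exactly one per class with strictly nested paths). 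Specifically, let $\ell'$ be the rooted link of class $\class(\ell)+1$; since $c(\ell') = 2c(\ell)$ and $P(\ell') \supsetneq P(\ell)$, any additional growth that pushes $\sum_{e \in P(\ell)} \lambda(e)y(e)$ past $2c(\ell)$ forces $\sum_{e \in P(\ell')} \lambda(e)y(e) \ge c(\ell')$, which triggers a type-2 addition of $\ell'$ (or a still-higher-class rooted link), updating $Z$ to contain $P(\ell')\supsetneq P(\ell)$. The growth in that triggering iteration is itself at most $c(\ell)$ by the same freshness-plus-feasibility argument, giving a final bound of $O(c(\ell))$.

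The main obstacle I anticipate is the subcase when $\ell$ is a type-1 rooted link of the \emph{highest} class (so no $\ell'$ of higher class exists to trigger the cap), or when every higher-class rooted link is already in $F$ and thus excluded from the type-2 check. I expect to handle this by a careful case analysis: if every such higher rooted link was added via type-2 before iteration $i_0$, the corresponding $Z$ already contains $P(\ell)$, contradicting $\ell$ being type-1; if it was added via type-1 before iteration $i_0$, an analogous chain argument propagates upward, bounding growth inductively along the chain of rooted links; and if no higher-class rooted link exists at all, the growth on $P(\ell)$ can be charged directly to the cost of the straddling non-rooted type-1 links that cause it, using Property~\ref{def:minimal:class} and the exponential spacing of classes to ensure this total is $O(c(\ell))$.
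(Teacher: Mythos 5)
Your proof takes a genuinely different route from the paper's. The paper pivots on the \emph{first iteration} $i$ at which $\sum_{e \in P(\ell)}\lambda(e)y(e)$ exceeds $c(\ell)$: it bounds the value at the end of iteration $i$ by $3c(\ell)$ (using $\lambdanew \le \lambdaold+1$ together with Proposition~\ref{prop:alg}) and then observes that at iteration $i$, $\ell$ is a candidate for the type-2 step, so a rooted link of class at least $\class(\ell)$ is added and $Z$ thereafter contains $P(\ell)$, freezing the sum. You instead condition on whether $\ell$ lands in $F$, and if so on which type; this forces a by-type case analysis that the paper avoids.

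Two remarks on the pieces you do have. First, the loop invariant you appeal to (every rooted $\ell' \notin F$ has $\sum_{e\in P(\ell')}\lambda(e)y(e) < c(\ell')$ at the end of each iteration) is in fact true, but the one-sentence justification you give for it is incomplete: adding only the highest-class violator as type-2 would leave lower-class violators in place. What actually restores the invariant is that whenever a type-2 link $\ell''$ is added, \emph{every} rooted link of strictly lower class is strictly contained in $P(\ell'')$ and hence satisfies the crossing condition, so it is swept into $F$ as a type-3 link in the same iteration. With this observation supplied, your Case 1 and Cases 2(i)--(ii) go through, and your per-iteration increase bound (fresh request $\Rightarrow$ $\lambda_{i_0-1}(e_{i_0}) = 0$, $\lambda$-update contributes at most $\sum_{e\in P(\ell)} y(e) \le c(\ell)$) is correct and even a bit tighter than the paper's $3c(\ell)$.

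The genuine gap is Case 2(iii), which you yourself flag. The chain argument works only if there is a higher-class rooted link $\ell'$ that is still outside $F$ when $\sum_{e\in P(\ell)}\lambda y$ threatens to cross $2c(\ell)$. Your fallback for the sub-case with no such $\ell'$ --- ``charge the growth to the cost of the straddling non-rooted type-1 links, using Property~\ref{def:minimal:class} and the exponential spacing of classes'' --- does not obviously bound the total by $O(c(\ell))$: Property~\ref{def:minimal:class} gives at most two covering links per class, so an edge $e\in P(\ell)$ can have $\lambda(e)$ incremented $\Theta(\log n)$ times, and the straightforward accounting then yields only $O(\log n)\,c(\ell)$. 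Some further structural argument (beyond what you sketch) is needed to make this sub-case close; the paper's ``first bad iteration'' framing is precisely what lets it bypass this difficulty.
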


\begin{proof}
  We will in fact show that $\sum_{e \in P(\ell)}\lambda(e)y(e) \leq O(1) c(\ell)$. Suppose, at the end of some iteration, that we have $\sum_{e \in P(\ell)} \lambda(e)y(e) > c(\ell)$. Consider the earliest iteration that this happens. We now show that $\sum_{e \in P(\ell)}\lambda(e)y(e) \leq O(1) c(\ell)$ at the end of the iteration and later show that the LHS cannot increase in future iterations.

Let $\lambdaold(e)$ and $\yold(e)$ denote the values of $\lambda(e)$ and $y(e)$ at the start of the iteration and $\lambdanew(e)$ and $\ynew(e)$ denote their values at the end. We have that \[\sum_{e \in P(\ell)} \lambdaold(e)\yold(e) < c(\ell).\] We now show that \[\sum_{e \in P(\ell)} \lambdanew(e)\ynew(e) \leq 3c(\ell).\] Let $e_i$ be the request of the current iteration. During this iteration, we only increase $y(e)$ for $e = e_i$ and we set $\lambda(e_i) = 1$ so $\lambdanew(e_i)\ynew(e_i) = y(e_i)$. So, we have
\[\sum_{e \in P(\ell)} \lambdanew(e)\ynew(e) = \sum_{e \in P(\ell) \setminus \{e_i\}} \lambdanew(e)\yold(e) + y(e_i).\]
Since $y$ is a feasible dual, we have that $y(e_i) \leq c(\ell)$.
Now, Proposition~\ref{prop:alg} implies that $\lambdaold(e) \geq 1$ if $\yold(e) > 0$. Together with the fact that $\lambdanew(e) \leq \lambdaold(e) + 1$, we get that $\lambdanew(e)\yold(e) \leq 2\lambdaold(e)\yold(e)$ and so
\[\sum_{e \in P(\ell) \setminus \{e_i\}} \lambdanew(e)\ynew(e) \leq 2 \sum_{e \in P(\ell) \setminus \{e_i\}} \lambdaold(e)\yold(e) < 2c(\ell).\]
Thus, $\sum_{e \in P(\ell)}\lambdanew(e)\ynew(e) \leq 3c(\ell)$ at the end of the current iteration.

  Finally, we show that $\sum_{e \in P(\ell)}\lambda(e)y(e)$ does not increase in future iterations. At the end of the current iteration, $\ell$ is a candidate to be added to $F$. Among all candidates, the one with highest class is added, so either $\ell$ is added to $F$ or a rooted link $\ell'$ of higher class is added to $F$. In the second case, by Proposition~\ref{prop:alg}, we have $P(\ell') \supseteq P(\ell)$. Thus, in either case, we have  that $Z \supseteq P(\ell)$ at the end of the current iteration. Moreover, in future iterations, we still have $Z \supseteq P(\ell)$ by Proposition~\ref{prop:alg}. Therefore, $\sum_{e \in P(\ell)}\lambda(e)y(e)$ does not increase in future iterations. Thus, we conclude that $\sum_{e \in P(\ell)}\lambda(e)y(e) \leq 3 c(\ell)$ at the end of the algorithm.
\end{proof}

Therefore, we conclude that Algorithm~\ref{alg:path} is nice. Together with Lemma~\ref{lem:refined}, we get Theorem~\ref{thm:det-tree-ub}.

\section{Lower Bound for Online Path Augmentation}
\label{sec:lb}
In this section, we prove Theorem~\ref{thm:det-path-lb}. 

Let $B$ be a constant to be fixed later. Consider the path with $n$ edges and $n+1$ vertices where $n$ is a power of $2B$. In the following, for $i \in \{0, \ldots, n\}$, we write $v_i$ as the $i$-th vertex of the path with $v_0$ being the leftmost vertex. We have $\log_{2B} n$ classes of links. Each link $\ell$ of class $j$ has cost $B^j$ and path length $|P(\ell)| = (2B)^j$; moreover, the class $j$ links are disjoint, i.e. for any $\ell, \ell'$ of class $j$, we have $P(\ell) \cap P(\ell') = \emptyset$, and they cover the entire path. In particular, each link $\ell$ of class $j$ is of the form $\ell = (i\cdot (2B)^j, (i+1)\cdot (2B)^j)$ for some $i \in \left\{0, \ldots, \frac{n}{(2B)^j}-1\right\}$. We say that a link $\ell$ \emph{contains} another link $\ell'$ if $P(\ell) \supseteq P(\ell')$. Note that the links form a hierarchical structure: a link $\ell$ of class $j$ contains $2B$ links of class $j-1$; we call the latter \emph{child links} of $\ell$, and $\ell$ their \emph{parent link}. 
We also say that the set of minimal links (i.e. those at the bottom of the hierarchical structure) are \emph{leaf links}. Note that these correspond exactly to the edges of the path.

We say that an algorithm is \emph{canonical} if for each request $e$, if it buys a class $j$ link $\ell$, then for each class $j' < j$, it also buys the unique link $\ell'$ which has $e \in P(\ell') \subseteq P(\ell)$. Since the costs are geometric, it is easy to see that we can make any algorithm canonical and lose only a constant factor in the competitive ratio. Thus, it suffices to prove a lower bound against canonical algorithms.

\begin{lemma}
  For every $B \geq 2$, every canonical algorithm $\ALG$ has competitive ratio at least $\Omega(\log_B n)$.
\end{lemma}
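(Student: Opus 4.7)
The plan is to construct an adaptive adversarial strategy that exploits the hierarchy of link classes, forcing the algorithm to expend cost $\Omega(J) \cdot B^{J-1}$ while keeping the offline optimum at $O(B^{J-1})$, where $J = \log_{2B} n$. Since $J = \Theta(\log_B n)$ for $B \geq 2$, this yields the claimed bound.

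First I would set up a recursive adversary subroutine $\text{ADV}(\ell, j)$ associated with each class-$j$ link $\ell$. The base case $j = 0$ simply issues a request at the unique edge of $\ell$. In the recursive case, the adversary iteratively picks a class-$(j-1)$ child $\ell'$ of $\ell$ that the algorithm has not yet covered by a class-$\geq j-1$ link, and invokes $\text{ADV}(\ell', j-1)$. After each inner call, the adversary inspects the algorithm's state and either (a) continues with a different, still-uncovered child, or (b) terminates if the algorithm has purchased a class-$\geq j$ link covering $\ell$. The adversary uses as the master instance $\text{ADV}(L, J-1)$ for a single top-class link $L$, so that all requests lie in $P(L)$ and $\OPT \leq B^{J-1}$.

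The central lemma, proved by induction on $j$, states: starting from a state in which the algorithm has not purchased any class-$\geq j$ link covering $\ell$, $\text{ADV}(\ell, j)$ forces the algorithm to pay at least $c \cdot (j+1) \cdot B^j$ for some constant $c = c(B) > 0$. The base case is immediate from canonicity (the algorithm must purchase at least the class-$0$ link of cost $1$). For the inductive step, I split into two cases. If the algorithm never buys a class-$\geq j$ link during $\text{ADV}(\ell, j)$, the recursive calls land in disjoint subregions, so the inductive hypothesis applies independently to each child and the costs add up; processing a suitable threshold of children (at most $2B$, since the $(j{-}1)$-children of $\ell$ are a partition of the edges of $\ell$) yields the target. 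If the algorithm does purchase a class-$\geq j$ link in some iteration $i^*$, then it pays at least $B^j$ for that link alone, on top of the cost extracted in the $i^*-1$ earlier iterations via the IH; this again totals at least $c (j+1) B^j$.

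The main obstacle will be the case $i^* = 1$, where the algorithm short-circuits the recursion by buying a class-$\geq j$ link in response to the very first recursive request, paying only $\Theta(B^j)$ and leaving nothing further to extract inside $\ell$. A naive accounting gives only a constant ratio in this scenario, which would collapse the induction from additive $+1$ per level down to a bounded constant. The fix I would pursue is to engineer the recursive subroutine so that the earliest requests of $\text{ADV}(\ell', j-1)$ themselves force the algorithm to commit nontrivial cost \emph{before} it has a chance to "jump" to a class-$\geq j$ link. Concretely, the inner subroutine will, in its first action, descend all the way to the leaves and issue requests in several distant class-$0$ children of $\ell'$ so that any class-$\geq j$ response still pays for a full canonical chain of cost $\Theta(B^j)$, while a class-$(j-1)$ response already commits the algorithm to the IH within $\ell'$. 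This, together with the observation that the adversary can adaptively keep feeding previously-untouched children as long as the algorithm continues to respond with small-class links, closes the $+1$ per level. Summing over $j$ gives ALG cost at least $\Omega(J) \cdot B^{J-1} = \Omega(\log_B n) \cdot \OPT$, completing the proof.
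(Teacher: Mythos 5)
Your plan is to design an adaptive recursive adversary $\text{ADV}(\ell,j)$ and prove by induction that it extracts $\Omega(j\cdot B^{j})$ while $\OPT \le B^{J-1}$. You correctly identify the core obstacle yourself: if the algorithm responds to the very first request inside $\ell$ by immediately buying the class-$j$ link (or an ancestor), the subroutine terminates after extracting only the cost of a canonical chain, which is $\Theta(B^j)$, not $\Omega(j\cdot B^j)$. The fix you sketch --- ``descend to the leaves and issue several distant class-0 requests before the algorithm can jump'' --- does not close this gap. No matter how the initial leaf requests are scattered, as soon as one of them arrives the algorithm is free to respond with the class-$\ge j$ chain and terminate the whole level-$j$ subroutine, having paid only $\Theta(B^j)$. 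Formally, your recursion is bounded by $P(j) \ge \min\{2B\cdot P(j-1),\ B^j + O(B^{j-1})\}$, which solves to $P(j)=\Theta(B^j)$ and hence gives only a constant competitive ratio against an optimum of $B^{J-1}$. There is also a double-counting concern you don't address: the $B^j$ term you credit to the short-circuiting purchase may be the same link credited at several levels of the recursion simultaneously, since a single class-$J$ purchase short-circuits $\text{ADV}$ at every level.

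The paper's proof takes a fundamentally different route that sidesteps this issue. The adversary is simply: while any edge is uncovered, request the leftmost uncovered edge. Rather than trying to show per-level that the algorithm must pay $\Omega(j B^j)$, the paper constructs, \emph{post hoc}, one feasible solution $F'_j$ per class $j$ (namely, the class-$j$ links that contain at least one issued request) and shows that the total cost $\sum_j c(F'_j)$ of all $\log_{2B} n + 1$ of these solutions is at most $2c(F)$. Since each $F'_j$ is feasible, $\OPT \le c(F'_j)$ for every $j$, and the ratio follows. The bound $\sum_j c(F'_j) \le 2 c(F)$ is proved by a token/charging argument whose engine is the observation that if the algorithm never bought a non-leaf link $\ell$, then (by canonicity and the left-to-right request order) every child of $\ell$ must have received a request; this lets the cost of unbought links be charged geometrically down the hierarchy to links the algorithm did buy. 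In particular, when the algorithm ``jumps'' to a high class early, the paper's accounting is automatically fine: the algorithm has paid $\Theta(B^{J})$, while each $F'_j$ has cost only $O(B^j)$ because few requests were issued, so the ratio is actually huge in that regime. Your per-level extraction framework cannot exploit this because it compares against the fixed target $B^{J-1}$ rather than against the actual (request-dependent) optimum.
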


\begin{proof}
  Consider the following sequence of requests: while there exists an edge that is not yet covered by $\ALG$, the adversary gives as the next request the leftmost edge that is not yet covered. Let $R$ be the sequence of requests in this instance and $F$ be the set of links bought by $\ALG$. The plan is to show that for each class $j$, there exists a feasible solution $F'_j$ consisting only of class $j$ links and that $\sum_j c(F'_j) \leq 2c(F)$. This would imply that $\OPT \leq 2c(F)/\log_{2B}n$ and thus imply the lemma. 

For each class $j$, define the set $F'_j$ to be the links $\ell$ of class $j$ such that there exists a request that is contained in $P(\ell)$. Let $F' = \bigcup_j F'_j$. Clearly, each $F'_j$ is a feasible solution. It remains to show that $c(F') \leq 2c(F)$. To do this, we now prove that $c(F' \setminus F) \leq c(F)$. The following claim will be useful. It says that if a link $\ell$ has $P(\ell)$ containing some request $e$ and $\ell$ was not chosen by the algorithm, then it must be the case that each child link $\ell'$ of $\ell$ has $P(\ell')$ containing some request $e'$.

\begin{claim}
  For each non-leaf link $\ell \in F' \setminus F$, every child link $\ell'$ of $\ell$ is in $F'$ as well.
\end{claim}

\begin{proof}
  We need to show that  for each child link $\ell'$ of $\ell$, there exists a request $e \in P(\ell')$. Since $\ell \notin F$ and the algorithm is canonical, it must be the case that for each request $e \in R \cap P(\ell)$, when the request $e$ arrived, the algorithm covered it using a link that is strictly contained in $\ell$. Thus, by construction of the request sequence, we have that for each child link $\ell'$ of $\ell$, there exists a request contained in $P(\ell')$ and so by definition of $F'$, we have that $\ell' \in F'$.
\end{proof}

We use a token-based argument to prove that $c(F' \setminus F) \leq c(F)$. There will be two types of tokens: \emph{original tokens} and \emph{virtual tokens}. Each token $a$ has a value $v(a)$; moreover, if $a$ is virtual, then it is also associated with a \emph{parent token} $p(a)$ (which may be either an original or virtual token) and we say that $a$ is a \emph{child token} of $p(a)$. Initially, there are no tokens. For each link $\ell \in F$, we give it an original token of value $c(\ell)$. We will now create virtual tokens to pay for links in $F' \setminus F$. We say that a link $\ell \in F' \setminus F$ is paid for if it has tokens of value at least $c(\ell)$. We create virtual tokens iteratively in a bottom-up manner: While there exists a link in $F' \setminus F$ that is not paid for, let $\ell$ be the link of smallest class in $F' \setminus F$ that is not paid for; then, for each token $a$ associated with a child link of $\ell$, we create and give to $\ell$ a virtual token $a'$ with $v(a') = v(a)/2$ and $p(a') = a$. We now argue that in each iteration of this procedure, $\ell$ is paid for. First, observe that $\ell$ cannot be a leaf link (and thus the above procedure is actually valid).  By definition of $F'$, the links in $F'$ that are leaf links correspond exactly to the set of requested edges. Since the algorithm is canonical, these links are also in $F$ and are thus paid for. Therefore, $\ell$ is not a leaf link; moreover, the above claim implies that each of its $2B$ child links are in $F'$. Since $\ell$ is the link of smallest class that is not paid for, its child links are already paid for so they have tokens of total value at least $2B\cdot B^{j-1} = 2B^j$. Therefore, we get that the tokens given to $\ell$ have value at least $B^j$ and so $\ell$ is now paid for.

Since each link of $F' \setminus F$ is paid for by virtual tokens, we get that $c(F' \setminus F)$ is at most the value of virtual tokens. We now argue that the total value of the virtual tokens is at most the value of the original tokens. Note that each token, virtual or original, can have at most one child token. Moreover, every virtual token is a descendant of an original token. Finally, the value of a virtual token is half its parent token's value. Putting the above together, we get that for each original token of value $B^j$, the total value of its descendant virtual tokens is at most $\sum_{i \geq 1} B^j/2^i \leq B^j$. Therefore, the total value of virtual tokens is at most the total value of original tokens. Since the value of the original tokens is exactly $c(F)$, we get that $c(F' \setminus F) \leq c(F)$, as desired. This completes the proof of the lemma.
\end{proof}

Using $B = 2$ in the above lemma gives us Theorem~\ref{thm:det-path-lb}.

\section{A Fractional Algorithm for Online Path Augmentation}
\label{sec:frac}
In this section, we prove Theorem~\ref{thm:rand-path-ub}. We begin by defining the online fractional path augmentation problem formally. We have variables $x(\ell)$ for each link $\ell$ which represent the fraction of $\ell$ that we have bought so far. Initially, $x(\ell) = 0$ for every $\ell$. Then, when a request $e_i$ arrives, we need to increase variables so that $\sum_{\ell : e_i \in P(\ell)} x(\ell) \geq 1$. Moreover, variables cannot be decreased. The cost of a fractional solution $x$ is $\sum_\ell x(\ell) c(\ell)$. The goal is to maintain a feasible fractional solution with low cost. Equivalently, we can think of the online fractional problem in terms of solving LP~\eqref{lp:rP} online: the constraints of the LP arrive one by one, and we need to maintain a feasible fractional solution.

Throughout this section, we assume that we are working with minimal instances as defined in Section~\ref{sec:path}. The following lemma is crucial: it implies that to be competitive against any fractional solution, it suffices to be competitive against any integral solution.

\begin{lemma}[Integrality of LP~\eqref{lp:rP}]
  \label{lem:LP-integral}
  For any instance of path augmentation, there is an optimal solution to LP~\eqref{lp:rP} that is integral.
\end{lemma}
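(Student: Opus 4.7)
The plan is to show that the constraint matrix of LP~\eqref{lp:rP} is totally unimodular (TU); combined with the integer right-hand side (the all-ones vector), this implies that the feasible polytope has only integer vertices, and hence the LP attains its optimum at an integral point. Recall that by the elementary-request assumption from Section~\ref{sec:prelim}, we may take each request to be a single tree edge, so the constraint matrix $A$ has rows indexed by tree edges $e \in \R$ and columns indexed by links $\ell \in L$, with $A_{e,\ell}=1$ iff $e \in P(\ell)$ and $0$ otherwise.

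The key observation is that since the underlying tree $T$ is a path, we may linearly order its edges $e_1, e_2, \ldots, e_m$ along the path. For any link $\ell = (u,v)$, the edge set $P(\ell)$ is precisely the contiguous subpath from $u$ to $v$, which is an interval in this ordering. Thus every column of $A$ has its $1$-entries forming a consecutive block, i.e.\ $A$ satisfies the \emph{consecutive ones property} on its columns. It is a classical theorem (see, e.g., Schrijver, \emph{Theory of Linear and Integer Programming}) that any $0/1$ matrix with the consecutive ones property on rows or columns is totally unimodular; such matrices are in fact network matrices of a directed path.

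Given total unimodularity of $A$ and that the right-hand side $\mathbf{1}$ of LP~\eqref{lp:rP} is integral, the standard Hoffman--Kruskal theorem implies that every vertex of the polyhedron $\{x \geq 0 : Ax \geq \mathbf{1}\}$ is integral. Since the LP is bounded below (as all costs are nonnegative) and feasible whenever the path-augmentation instance is, the optimum is attained at a vertex, and so there exists an integral optimal solution, proving the lemma. There is no real obstacle here: the only substantive step is the consecutive-ones observation, and everything else is invocation of classical polyhedral results; the elementary-request reduction ensures the LP is exactly of the stated form and no additional preprocessing is needed.
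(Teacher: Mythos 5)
Your proof is correct and takes essentially the same route as the paper: observe that, after ordering the path edges linearly, the constraint matrix has the consecutive-ones property (you say on columns, the paper phrases it as rows, but the substance is identical), hence is totally unimodular, and with the integral right-hand side this yields an integral optimal vertex. The paper states this in a single sentence; you simply unfold the Hoffman--Kruskal step explicitly.
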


\begin{proof}
  The rows of the constraint matrix of LP~\eqref{lp:rP} satisfy the consecutive-ones property and thus is totally unimodular.
\end{proof}
Henceforth, we will use $\OPT$ to denote the value of an optimal integral solution and our goal is to be competitive against $\OPT$.

Let us first sketch a simple algorithm assuming that we are given the value of $\OPT$ at the beginning. Since $\OPT$ is the value of an optimal \emph{integral} solution, we can remove all links of cost more than $\OPT$ without loss of generality. Then for each request $e_i$, if $e_i$ can be covered by a link $\ell$ of cost at most $\OPT/n$, then we set the variable $x(\ell) = 1$; otherwise, we perform multiplicative update on the variables $\{x(\ell) : e_i \in P(\ell)\}$ \`{a} la the online fractional set cover algorithm of \cite{AAABN09}.

Let us now analyze this algorithm. We say that a request $e_i$ is \emph{small} if it can be covered by a link of cost at most $\OPT/n$, and \emph{large} otherwise. The \emph{incremental cost} incurred in serving a request is the increase in the algorithm's cost when it serves the request. The total incremental cost incurred in serving small requests is at most $\OPT$ since the incremental cost for each small request is at most $\OPT/n$ and there can be at most $n$ requests total (one request per edge of the path). What about the total incremental cost for large requests? The online fractional set cover algorithm of \cite{AAABN09} has a competitive ratio of $O(\log d)$ if each element is contained in at most $d$ sets. Thus, to show that the total incremental cost for large requests is at most $O(\log \log n)\OPT$, it suffices to prove that each large request is covered by at most $2 \log n$ links. Consider a large request $e_i$. Since we had removed links of cost more than $\OPT$ and $e_i$ is large, it is covered only by links of cost between $\OPT/n$ and $\OPT$. Moreover, since costs are powers of $2$, by Property~\ref{def:minimal:class} of minimal instances, $e_i$ is covered by at most $\log n$ links.

We now describe our online fractional path augmentation algorithm that does not need to know the value of $\OPT$ upfront. Let $\OPT_i$ denote the cost of the optimal integral solution for the first $i$ requests $e_1, \ldots, e_i$; Lemma~\ref{lem:LP-integral} implies that $\OPT_i$ can be found by computing a basic feasible solution of LP~\eqref{lp:rP}. For each request $e_i$, if $e_i$ can be covered by a link $\ell$ of cost at most $\OPT_i/n$, then we set the variable $x(\ell) = 1$; otherwise, we perform multiplicative update as follows. 
Let $L_i = \{\ell : e_i \in P(\ell) \wedge c(\ell) \in [\OPT_i/n, 2\OPT_i]\}$. The multiplicative update step does the following: while $\sum_{\ell \in L_i}x_{\ell} < 1$, for each $\ell \in L_i$, increase $x(\ell)$ continuously at the rate
\[\frac{dx(\ell)}{dt} = \frac{x(\ell) + 1/\log n}{c(\ell)}.\]
As before, we say that request $e_i$ is \emph{small} if there exists a link of cost at most $\OPT_i/n$ covering it, and \emph{large} otherwise.

\begin{lemma}
  \label{lem:fractional}
  The solution $x$ computed by the online fractional path augmentation algorithm above has cost $O(\log \log n)\OPT$.
\end{lemma}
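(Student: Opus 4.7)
\emph{Plan.} I would split the algorithm's cost into contributions from small and large requests. The small case is immediate: since elementary requests are distinct path edges, there are at most $n$ small requests, each contributing incremental cost at most $\OPT_i/n \leq \OPT/n$ (using that $\OPT_i$ is monotone non-decreasing in $i$ and $\OPT_k = \OPT$), for a total of at most $\OPT$.

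For large requests, the first observation is that $|L_i| = O(\log n)$: since $e_i$ is large, every link covering it costs strictly more than $\OPT_i/n$, so $L_i$ is precisely the set of links covering $e_i$ of cost at most $2\OPT_i$. Because costs are powers of two, this spans $O(\log n)$ cost classes, and Property~\ref{def:minimal:class} of minimal instances bounds the number of covering links per class by two. The algorithm's multiplicative-update rule is exactly an AAABN09-style update with $\delta = 1/\log n$ matching the frequency bound $d = \max_i |L_i| = O(\log n)$. Hence their analysis, applied to the sequence of large-request constraints $\sum_{\ell \in L_i} x(\ell) \geq 1$, bounds the total cost incurred on large requests by $O(\log d)\cdot \OPT^{\star} = O(\log \log n)\cdot \OPT^{\star}$, where $\OPT^{\star}$ is the optimal LP cost of any fractional solution satisfying all of these constraints simultaneously.

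The remaining step is to show $\OPT^{\star} = O(\OPT)$ via a phase decomposition. Group the large requests into phases by putting $e_i$ into phase $k$ when $\OPT_i \in [2^k, 2^{k+1})$; since $\OPT_i$ is monotone, phases are consecutive. Let $i_k$ be the last request of phase $k$ and $y^*_k$ an optimal integral solution for the first $i_k$ requests, so $c(y^*_k) < 2^{k+1}$. Every link of $y^*_k$ costs less than $2^{k+1} \leq 2\OPT_i$ for each phase-$k$ request $e_i$, so $y^*_k$ covers every such $e_i$ using a link lying in its $L_i$. Then $x^{\star} = \sum_k y^*_k$ (sum of characteristic vectors) is feasible for the large-request LP, and $c(x^{\star}) \leq \sum_k c(y^*_k) < \sum_k 2^{k+1} \leq 4\OPT$, where the last inequality uses that the maximum phase index $K$ satisfies $2^K \leq \OPT$. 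Combining, the large-request cost is $O(\log \log n)\cdot \OPT$, and adding the $\OPT$ bound for small requests yields the theorem. The main step to verify carefully is the AAABN09 bound in the presence of the ``moving'' sets $L_i$ that depend on $\OPT_i$; the phase decomposition is precisely what addresses this, producing a single explicit feasible LP reference $x^{\star}$ of cost $O(\OPT)$ without needing to reason about how the algorithm's fractional state persists across phases.
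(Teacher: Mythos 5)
Your proposal follows essentially the same approach as the paper's proof: split incremental cost into small and large requests, bound the small-request contribution by $\OPT$ via the at-most-$n$ requests argument, observe $|L_i| = O(\log n)$ using cost classes and Property~1 of minimal instances so that AAABN09 gives a $O(\log\log n)$ ratio against the restricted LP, and then bound the restricted LP optimum by $O(\OPT)$ via the phase decomposition with the geometric sum $\sum_k 2^{k+1} \leq 4\OPT$. The only cosmetic difference is that you build a fractional reference $x^{\star} = \sum_k y^*_k$ where the paper builds an integral restricted solution $F = \bigcup_j \hat{F}_j$; both yield the same $4\OPT$ bound and both rely on integrality of the phase-wise optima $y^*_k$ (resp.\ $F^*_j$) to ensure the covering link of each large request lands in $L_i$.
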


\begin{proof}
  It is easy to see that again, the total incremental cost incurred in serving small requests is at most $\OPT$. However, bounding the total incremental cost incurred in serving large requests requires more work.
  Say that an integral solution $F$ is \emph{restricted} if for every large request $e_i$, there is a link in $F$ covering $e_i$ that is also in $L_i$, i.e. $F$ is feasible even if $e_i$ is restricted to be covered only by links in $L_i$. Since costs are powers of $2$, Property~\ref{def:minimal:class} of minimal instances implies that $|L_i| \leq O(\log n)$ and so, as before, the analysis in \cite{AAABN09} implies that the total incremental cost incurred in serving large requests is at most $O(\log \log n) c(F)$ for every restricted solution $F$. It remains to show that there exists a restricted solution $F$ with cost at most $O(\OPT)$.

  Define phase $j$ of the algorithm to consist of the time steps $i$ for which $\OPT_i \in [2^j, 2^{j+1})$ and $i(j)$ to be the last time step of phase $j$. Let $\OPT_j = \OPT_{i(j)}$, the value of an optimal integral solution $F^*_j$ for the requests seen by the end of phase $j$. For each large request $e_i$ in phase $j$, since $\OPT_j < 2 \OPT_i$ and $F^*_j$ is integral, it must be the case that $e_i$ is covered by $F^*_j$ using a link in $L_i$. Let $\hat{F}_j$ be the subset of $F^*_j$ that covers the large requests of phase $j$. Therefore, we get that $F = \bigcup_j \hat{F}_j$ is a restricted solution. The cost of $F$ is $c(F) \leq \sum_{j\leq m} 2^{j+1} \leq 2^{m+2}$, where $m$ is the final phase. Since $\OPT = \OPT_m \geq 2^m$, we have that $c(F) \leq 4\OPT$. This proves that there exists a restricted solution $F$ with cost at most $O(\OPT)$. Therefore, we conclude that $x$ has cost at most $O(\log \log n)\OPT$.
\end{proof}

This proves Theorem~\ref{thm:rand-path-ub}.

\bibliographystyle{alpha}
{\small \bibliography{tap}}

\end{document}